\newtheorem{ass}{Assumption}
\newtheorem{rmk}{Remark}
\newtheorem{prb}{Problem}
\newtheorem{trm}{Theorem}
\newtheorem{prp}{Proposition}
\newtheorem{cor}{Corollary}
\newcommand{\real}{\mathbb{R}}
\newcommand{\unit}{\mathbb{U}}
\newcommand{\quat}{\mathbb{Q}}
\newcommand{\sqr}{$\hfill\square$}
\title{Spacecraft Attitude Control with Nonconvex Constraints:\\ An Explicit Reference Governor Approach \thanks{The authors are with the University of Michigan, Ann Arbor. Email: \texttt{\{mnicotra,dliaomcp,lburlion,ilya\}@umich.edu}.  This research is supported by the National Science Foundation Award Number CMMI 1562209.}}
\author{Marco M. Nicotra, Dominic Liao-McPherson, Laurent Burlion, Ilya V. Kolmanovsky}
\begin{document}
\maketitle

\begin{abstract}
This paper introduces a novel attitude controller for spacecraft subject to actuator saturation and multiple exclusion cone constraints. The proposed solution relies on a two-layer approach where the first layer prestabilizes the system dynamics whereas the second layer enforces constraint satisfaction by suitably manipulating the reference of the prestabilized system. In particular, constraint satisfaction is guaranteed by taking advantage of set invariance properties, whereas asymptotic convergence is achieved by implementing a non-conservative navigation field which is devoid of undesired stagnation points. Multiple numerical examples illustrate the good behavior of the proposed scheme.
\end{abstract}

\section{Introduction}
Spacecraft must often perform large angle reorientation maneuvers to accomplish their missions (e.g. to point a camera, antenna, or telescope at a different celestial or terrestrial objects). In many cases, these maneuvers are complicated by the fact that spacecraft often carry sensitive instruments which cannot be (or must be) pointed in certain directions \cite{koenig2009novel,hablani1999attitude}. A notable example is the Cassini spacecraft where a constraint monitoring module was used to ensure that certain sensors were never pointed towards the sun\cite{singh1997constraint}. %As these systems become increasingly autonomous, approaches where trajectories and control actions are generated onboard, rather than up-linked from the ground, are of interest. Onboard computations reduce ground segment work load and reduce the number of constraints which need to be considered during a slew maneuver, see e.g., \cite{hablani1999attitude} which has to enforce spacecraft-ground station communication constraints in addition to exclusion zone constraints.

Most existing solutions for constrained reorientation with exclusion zones can be broadly divided into path planning methods, trajectory optimization methods, potential field methods, and anti-windup methods. Broadly speaking, path planning methods function by discretizing the attitude space and generating a feasible path by applying a graph search type method \cite{kjellberg2015discretized,tanygin2016fast,weiss2014spacecraft} or suitable random search algorithms \cite{cui2007onboard,frazzoli2001randomized}. Random and graph search algorithms can be computationally expensive, and may neglect dynamics, but often have probabilistic completeness or asymptotic optimality guarantees. 

% take a geometric approach \cite{hablani1999attitude,xu2018rotational} or
% The performance of graph and random search methods is also highly dependent upon the discretization and attitude representation used and may neglect the dynamics of the spacecraft.

Trajectory optimization methods, which include model predictive control \cite{eren2017model} approaches, pose the constrained reorientation problem as a constrained optimal control problems. The resulting optimization problems are non-convex and can be solved by direct methods \cite{lee2013quaternion}, indirect methods \cite{lee2016geometric}, global optimization techniques \cite{eren2015mixed,richards2002spacecraft,tam2016constrained}, or by applying convexification techniques \cite{kim2004quadratically,hutao2011rhc}. These approaches are able to achieve high performance while accounting for the system dynamics, but have the disadvantage of being computationally intensive.

Potential field methods generate a suitable control law by constructing an artificial potential field that has a minimum at the target reference and high values around the exclusion zones. This field is then used as either a Lyapunov function to which techniques from nonlinear control are applied\cite{mcinnes1994large,lee2014feedback,avanzini2009potential,wisniewski2005slew,mengali2004spacecraft,shen2017velocity} or as a cost function to which techniques from optimal control\cite{spindler2002attitude} are applied. Potential field methods are typically computationally inexpensive, and thus suitable for onboard implementation. Under suitable assumptions they may also automatically provide stability and robustness guarantees. However, special care must be taken to prevent the spacecraft from becoming stuck in local minima or critical points of the potential function \cite{APF2,okamoto2015novel,doria2013algorithm}. A variety of potential field methods have been proposed in the literature, e.g. \cite{mcinnes1994large,lee2014feedback,avanzini2009potential} which use Gaussian functions, log barriers, and exponential functions, respectively, to encode forbidden and mandatory regions. However, while these papers consider the effect of stationary points induced by the artificial potential functions they do not account for actuator saturation. 

Finally, anti-windup schemes can be used to augment a nominal control to deal with control saturation \cite{Boada2010} and/or output constraints \cite{burlionIJC2017}. Although these methods are attractive to practitioners since they do not modify the nominal control law, addressing multiple exclusion cone constraints with this approach can be very challenging.

In this paper we propose a new method based on the Explicit Reference Governor (ERG) framework \cite{garone2016explicit,ERG_CSM}. An ERG is an add-on unit which manipulates the commands given to a pre-stabilized system to prevent constraint violation. The ERG consists of a navigation component, which generates a kinematically feasible path towards the target, and a dynamic safety margin mechanism, which prevents constraint violations in transients when following this path with the closed-loop dynamical system. %The two components work together to safely guide the spacecraft to its target attitude while enforcing control saturation constraints and exclusion zone constraints on multiple instrument boresights

Compared to existing methods, the ERG developed in this paper (i) enforces control saturation and multiple exclusion cone\footnote{Although this paper focuses on the case of exclusion zone constraints, we note that it is straightforward to extend the proposed methodology to the case of inclusion cone constraints.} constraints, (ii) rigorously handles stationary points of the artificial potential function using a saddle destabilization term, and (iii) is computationally inexpensive due to its closed-form formulation. The theoretical contributions of this paper are twofold. We apply the theory of ERGs to quaternion spaces by constructing an appropriate navigation field and dynamic safety margin then proving their admissibility. In addition, we introduce a novel procedure, based on Nagumo's theorem, to construct less conservative dynamic safety margins for handling actuator saturation constraints.

% novel application of the ERG theory to a meaningful problem

% extension of the ERG theory from Cartesian to quaternion spaces
% novel definition of the dynamics safety margin and navigation field in \quat
% new method for less conservatively defining the dynamic safety margin the 

\section{Preliminaries}
This paper will employ elements belonging to the Cartesian space $\real^n$, the unit vector set $\unit^n:=\{u\in\real^n~|~u^T\!u=1\}$, and the unitary quaternion set $\quat$. Given $q\in\quat$, we denote $q_R=\text{Re}(q)$ and $q_I=\text{Im}(q)$ as the real and imaginary components of $q$, respectively. Moreover, we denote by $q^*$ the complex-conjugate of $q$. With a slight abuse of notation, quaternions will sometimes be treated as elements of $\unit^4$ for the sake of using matrix multiplication, thus leading to $q_R\in\real$ for the scalar part of the quaternion and $q_I\in\real^3$ for its vector part. The complex conjugate of a quaternion $q$ satisfies $q_R^*=q_R$ and $q_I^*=-q_I$. Given two quaternions $q,p\in\quat$, the quaternion product $s=qp$ can be computed in matrix form using
\begin{equation}\label{eq:QuatProduct}
\begin{bmatrix} s_R \\ s_I \end{bmatrix} = \begin{bmatrix}
q_R & -q_I^T \\
q_I & q_R I_3+\hat{q}_I
\end{bmatrix} \begin{bmatrix} p_R \\ p_I \end{bmatrix},
\end{equation}
where the operator $\wedge:\real^3\to\real^{3\times3}$ is defined as
\begin{equation}
\hat{w} = \begin{bmatrix}
~~0 & -w_3 & ~~w_2 \\
~~w_3 & ~~0 & -w_1 \\
-w_2 & ~~w_1 & ~~0
\end{bmatrix}\!.
\end{equation}
Given a quaternion $q\in\quat$, the rotation matrix that transforms vector representations from the body-fixed reference frame to the inertial reference frame is given by
\begin{equation}\label{eq:Rodriguez}
R(q)=(q_R^2-q_I^T\!q_I)I_3+2 q_Iq_I^T+2q_R\hat{q}_I.
\end{equation}
For further details on quaternion algebra and their use in representing SO(3), the reader is referred to, e.g. \cite{SO3_Representation}. 

\section{Modeling and Problem Statement}
Consider a rigid body, e.g. a spacecraft, and let $q\in\quat$ represent the orientation of the body reference frame with respect to an inertial reference frame. Let the angular velocities $\omega\in\real^3$ and the control torques $\tau\in\real^3$ be expressed in the body reference frame. Following from the Newton-Euler equations, the rotational dynamics of the rigid body is given by the state-space model
\begin{subequations}\label{eq:Satellite}
  \begin{align}[left ={\empheqlbrace}~]
    & 2\dot{q}=E(q)\omega\label{eq:SatKin}\\
    & J\dot\omega=-\hat\omega J\omega+\tau\label{eq:SatDyn},
   \end{align}
\end{subequations}
where $J>0$ is the inertia matrix and
\begin{equation}\label{eq:QuatDerivative}
E(q) = \begin{bmatrix}
-q_I^T \\
q_R I_3-\hat{q}_I
\end{bmatrix}
\end{equation}
is the quaternion kinematic differentiation matrix. The system \eqref{eq:Satellite} is subject to the following non-convex set of input and state constraints.

\textbf{Actuator Saturation}: The actuator saturation constraint accounts for the fact that the control input is subject to physical limitations specific to thruster-based or momentum exchange attitude control systems. These constraints can be expressed using the following element-wise inequalities,
\begin{equation}\label{eq:ActSat_Original}
-\tau_{\max}\leq\tau\leq\tau_{\max},
\end{equation}
where $\tau_{\max}\in\real^3$ is a vector of positive values.
% \item \textbf{Inclusion Cone:} Given an orientation $a\in\mathbb{H}$ and an inclusion angle $\theta\in(0,\pi]$, the system must satisfy $q\in\mathrm{Int}(\mathcal{C}(a,\theta))$.

\textbf{Exclusion Cones}: The exclusion cone constraints are introduced to prevent certain on-board instruments, e.g. high sensitivity optical sensors, from pointing to an undesired direction, e.g. the sun. To this end, let $e_i\in\unit^3$, $i\in\{1,\ldots,l\}$, be a collection of unitary vectors defined in the body reference frame and let $h\in\unit^3$ be an undesired heading defined in the inertial reference frame. Given the current spacecraft orientation $q\in\quat$, the angular distance between the undesired heading $h$ and the sensor orientation $R(q)\,e_i$ can be lower bounded by  the constraint
\begin{equation}\label{eq:ExclusionCone}
h^T\!R(q)\,e_i\leq\cos\psi_i,\quad i\in\{1,\ldots,l\},
\end{equation}
where $\psi_i$ is the minimum admissible angle, e.g. the half conic aperture of the $i$-th sensor. To prevent overlap between any two exclusion cone constraints, the following assumption is made:

\begin{ass}\label{ass:DistanceMargin}
There exists an influence margin $\zeta\in(0,\pi)$ such that the orientations of all the onboard sensors satisfy
\begin{equation}
e_i^Te_j<\cos(\psi_i+\psi_j+2\zeta),\quad\forall i\neq j,
\end{equation}
with $i,j\in\{1,\ldots,l\}$.\sqr
\end{ass}

Given the influence margin $\zeta\in(0,\pi)$, the set of unaffected orientations will be denoted by 
\begin{equation}\label{eq:Unaffected}
\mathcal{R}_\zeta:=\left\{v\!\in\!\quat\left|~h^T\!R(v)e_i\!\leq\!\cos(\psi_i\!+\!\zeta),\,\forall i\!\in\!\{1,\ldots,l\}\!\right.\!\right\}\!.
\end{equation}
Analogously, given a static safety margin $\delta\in(0,\zeta)$, the set of steady-state admissible orientations will be denoted by
\begin{equation}\label{eq:SSAdmissible}
\mathcal{R}_\delta:=\left\{v\!\in\!\quat\left|~h^T\!R(v)e_i\!\leq\!\cos(\psi_i\!+\!\delta),\,\forall i\!\in\!\{1,\ldots,l\}\!\right.\!\right\}\!.
\end{equation}
Note that, by construction, the sets \eqref{eq:Unaffected}-\eqref{eq:SSAdmissible} satisfy $\mathcal{R}_\zeta\subset\mathcal{R}_\delta\subset\quat$.

\subsection{Problem Statement}
The objective of this paper is to develop a control strategy that solves the following constrained control problem:

\begin{prb}\label{prb:General}
Let system \eqref{eq:Satellite} be subject to constraints \eqref{eq:ActSat_Original}-\eqref{eq:ExclusionCone}. Then, given suitable initial angular velocities $\omega(0)\in\real^3$ and a constant reference $r\in\quat$ strictly satisfying \eqref{eq:ExclusionCone} with $q = r$, design a control law such that:
\begin{enumerate}[1.]
\item Constraints \eqref{eq:ActSat_Original}-\eqref{eq:ExclusionCone} are always satisfied;
\item The equilibrium point $(q,\omega)=(r,0)$ is asymptotically stable and is attractive for any $q(0)\in\quat$ strictly satisfying \eqref{eq:ExclusionCone}.\sqr
\end{enumerate}
\end{prb}
 
To meet this objective, we propose a two-step solution based on the explicit reference governor framework. The first step is pre-stabilizing the system dynamics so that its orientation asymptotically tends to an auxiliary reference $v\in\quat$. This is done using the control layer detailed in Section \ref{sec: Control}, which does not account for the system constraints. The second step consists in manipulating the dynamics of the auxiliary reference $v(t)$ so that it asymptotically tends to $r$, while simultaneously ensuring constraint satisfaction. This is accomplished using the navigation layer detailed in Section \ref{sec: Navigation}.

%Given an orientation $e\in\mathbb{H}$ and an exclusion angle $\psi\in(0,\pi]$, the system must satisfy $q\in\mathrm{Ext}(\mathcal{C}(e,\psi))$. To ensure feasibility, it is assumed that the exclusion cone constraint is entirely contained in the inclusion cone constraint, meaning that $\mathcal{C}(e,\psi)\in\mathrm{Int}(\mathcal{C}(a,\theta))$. 
% \blue{Define the architecture of the solution by defining two separate problems for the control layer and the navigation layer.}

\section{Control Layer}\label{sec: Control}
The objective of the control layer is to pre-stabilize the system dynamics. Consequently, this section assumes, for the purpose of controller design, that the auxiliary reference $v$ is constant. The time-varying nature of $v(t)$ will be addressed by the ERG framework detailed in Section \ref{sec: Navigation}. The following proposition summarizes some well-known results on quaternion-based unconstrained control. For further details, the reader is referred to, e.g., \cite{SO3_Control}.  

\begin{prp}\label{prp:Control}
Given system \eqref{eq:Satellite}, let $v\in\quat$ be a constant reference and let $\tilde{q}=qv^*$ be the attitude error. Then, given the control law
\begin{equation}\label{eq: Control Law}
%\tau(v,q,\omega) = -k_P(v_Rq_I-(q_RI_3+\hat{q}_I)v_I)-k_D\omega,
\tau(\tilde{q},\omega) = -k_P\tilde{q}_I-k_D\omega,
\end{equation}
with $k_p,k_D > 0$, the equilibrium point $(q,\omega)=(v,0)$ of the closed-loop system is asymptotically stable and admits the set
\begin{equation}\label{eq:NoUnwinding}
\Omega:=\left\{(\tilde{q},\omega)~\left|~ 2k_P(1-\tilde{q}_R)+\frac12\omega^TJ\omega\leq2k_P\right.\right\}
\end{equation}
as an inner approximation of its attraction basin.
\sqr
\end{prp}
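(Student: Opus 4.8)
The plan is to use the very function that defines $\Omega$ as a Lyapunov function. I would set
\[
V(\tilde q,\omega) = 2k_P(1-\tilde q_R)+\tfrac12\,\omega^T\!J\omega,
\]
so that $\Omega=\{V\le 2k_P\}$. Since $|\tilde q|=1$ forces $1-\tilde q_R\ge 0$ with equality only at $\tilde q=1$ (where necessarily $\tilde q_I=0$), and since $J>0$, the function $V$ is positive definite with respect to the target equilibrium $(\tilde q,\omega)=(1,0)$, i.e. $(q,\omega)=(v,0)$. Because $v$ is constant, the error quaternion $\tilde q = q v^*$ inherits from \eqref{eq:SatKin} its kinematics; a direct computation of the scalar part gives $\dot{\tilde q}_R = -\tfrac12\,\tilde q_I^T\omega$, which is the only component that will enter the analysis.

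Next I would differentiate $V$ along the closed-loop trajectories. For the rotational term, substituting \eqref{eq:SatDyn} and the control law \eqref{eq: Control Law}, and using the skew-symmetry identity $\omega^T\hat\omega J\omega = 0$ to eliminate the gyroscopic contribution, yields $\omega^T\!J\dot\omega = -k_P\,\omega^T\tilde q_I - k_D\|\omega\|^2$. For the attitude term, the kinematic relation above gives $-2k_P\dot{\tilde q}_R = k_P\,\omega^T\tilde q_I$. Adding the two contributions, the cross terms cancel identically and
\[
\dot V = -k_D\|\omega\|^2 \le 0.
\]
This cancellation—engineered so that the proportional gain $k_P$ in the control law matches the attitude kinematics—is the crux of the argument.

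Since $\dot V\le 0$, the sublevel set $\Omega$ is positively invariant, and it is compact because $\tilde q$ lies on the unit sphere while $\tfrac12\omega^T\!J\omega\le 2k_P$ bounds $\omega$. As $\dot V$ is only negative semidefinite, I would invoke LaSalle's invariance principle. On the set $\{\dot V=0\}=\{\omega=0\}$, invariance forces $\dot\omega=0$, hence $\tau=0$, and since $\omega=0$ this gives $\tilde q_I=0$; the only unit quaternions with $\tilde q_I=0$ are $\tilde q=\pm 1$.

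The main subtlety—and the reason $\Omega$ is defined with the precise threshold $2k_P$—is the antipodal equilibrium $\tilde q=-1$, which represents the same physical orientation as $v$ but is a spurious critical point introduced by the quaternion double cover (the source of unwinding). Evaluating the Lyapunov function there gives $V(-1,0)=4k_P>2k_P$, so $\tilde q=-1\notin\Omega$. Consequently the largest invariant subset of $\Omega$ on which $\dot V=0$ is exactly $\{(\tilde q,\omega)=(1,0)\}$, and every trajectory starting in $\Omega$ converges to it. Combined with the positive definiteness of $V$, this establishes asymptotic stability of $(q,\omega)=(v,0)$ with $\Omega$ as an inner estimate of its region of attraction.
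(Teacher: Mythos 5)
Your proof is correct and follows essentially the same route as the paper, which simply invokes the standard Lyapunov/LaSalle argument from Khalil with the same function $V(\tilde q,\omega)=2k_P(1-\tilde q_R)+\tfrac12\omega^T\!J\omega$; you have merely written out the details (the cancellation giving $\dot V=-k_D\|\omega\|^2$, and the exclusion of the antipodal equilibrium via $V(-1,0)=4k_P>2k_P$) that the paper leaves implicit.
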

\begin{proof}
The proof follows \cite{khalil} by employing the closed-loop Lyapunov function
\begin{equation}\label{eq:Lyapunov}
V(\tilde{q},\omega) = 2k_P(1-\tilde{q}_R)+\frac12\omega^T\!J\omega,
\end{equation}
which converges to zero along the closed-loop trajectories for any initial condition satisfying $(\tilde{q}(0),\omega(0))\in\Omega$, thereby implying the result.
% \\
% \color{blue}{Other strict Lyapunov function to study \cite{DiGennaro2002}?:
% \begin{equation}\label{eq: Lyapunov2}\begin{array}{ll}
% V(\tilde{q},\omega)&=(k_p+k_D)[(1+\tilde{q}_0)^2+\tilde{q}_I^T \tilde{q}_I]\\
% &+\frac{1}{2}(\tilde{q}_I+\omega)^TJ(\tilde{q}_I+\omega)
% \end{array}
% \end{equation}
% }
\end{proof} 

Given the control law \eqref{eq: Control Law}, the actuator saturation constraints \eqref{eq:ActSat_Original} can be rewritten as the following state-space constraints,
\begin{equation}\label{eq:ActSat}
\begin{array}{r}
k_P\tilde{q}_I+k_D\omega\leq\tau_{\max}\textcolor{white}{.},\\
-k_P\tilde{q}_I-k_D\omega\leq\tau_{\max}.
\end{array}
\end{equation}

\begin{rmk}
Limiting the initial conditions to $(\tilde{q}(0),\omega(0))\in\Omega$ is sufficient to avoid \emph{unwinding} effects \cite{SO3_Control}. Although this assumption may seem restrictive from the classic viewpoint of ``global'' stabilization, it is worth noting that the presence of constraints changes the nature of the problem to the extent that a global region of attraction is no longer attainable. 
\end{rmk}

\section{Navigation Layer}\label{sec: Navigation}
The objective of the navigation layer is to augment the control layer by providing constraint handling capabilities to the closed-loop system. This will be done using the explicit reference governor, which manipulates the auxiliary reference $v(t)$ so that the transient dynamics cannot cause constraint violation. Based on the intuition presented in \cite{ERG_CSM}, this paper will generate the auxiliary reference using the dynamic system
\begin{equation}\label{eq:ERG}
\dot{v}=\Delta(\tilde{q},\omega)\rho(v,r),
\end{equation}
where $\Delta:\quat\times\real^3\to\real$ is referred to as a \emph{Dynamic Safety Margin} and $\rho:\quat\times\quat\to\real^4$ is referred to as a \emph{Navigation Field}. A rigorous definition of these terms, as well as a formal proof of how the ERG ensures constraint satisfaction, can be found in \cite{ERG_CSM}. Note that, although \cite{ERG_CSM} is formulated in Cartesian space, the results stated therein can be implemented on quaternions with straightforward modifications. The chief difficulties lie in defining appropriate an dynamic safety margin and navigation field in $\quat$.

The dynamic safety margin can be interpreted as the distance between the boundary of the constraints \eqref{eq:ExclusionCone}, \eqref{eq:ActSat} and the trajectory that the closed-loop system \eqref{eq:Satellite}, \eqref{eq: Control Law} would follow if the current reference $v$ were to remain constant. This value is thus used to quantify how safe it is to modify the current reference without causing a constraint violation at any time in the future. The navigation field can instead be interpreted as the direction in which the auxiliary reference $v(t)$ should evolve to reach the desired reference $r$ while simultaneously remaining steady-state admissible. The following subsections will illustrate how to generate suitable $\Delta(\tilde{q},\omega)$ and $\rho(v,r)$.

\subsection{Dynamic Safety Margin}\label{ssec:DSM}
The objective of this subsection is to obtain a Lipschitz continuous scalar function $\Delta$ that satisfies the following properties:
\begin{enumerate}[1.]
\item \emph{Recursive Feasibility:} Whenever $\Delta(\tilde{q},\omega)\geq0$, where $\tilde{q}$ and $\omega$ are the attitude error and the angular velocity at the current time, it is possible to guarantee constraint satisfaction at any time in the future by not changing the current value of the auxiliary reference;
\item \emph{Forward Invariance:} Whenever $\Delta(\tilde{q},\omega)=0$, the closed-loop system satisfies $\dot\Delta(\tilde{q},\omega)\geq0$ as long as the auxiliary reference remains constant;
\item \emph{Strong Returnability:} For any constant and strictly steady-state admissible reference, $\Delta(\tilde{q},\omega)$ asymptotically tends to a value that is strictly positive.
\end{enumerate}
The first step in constructing the dynamic safety margin is to separate the contributions of the various constraints by defining
\begin{equation}\label{eq:DSM}
\Delta(\tilde{q},\omega)=\min\{\Delta_e(\tilde{q},\omega),\Delta_a(\tilde{q},\omega)\},
\end{equation}
where $\Delta_e(\tilde{q},\omega)$ is the dynamic safety margin associated to the exclusion cone constraints \eqref{eq:ExclusionCone}, whereas $\Delta_a(\tilde{q},\omega)$ is the dynamic safety margin associated to the actuator saturation constraint \eqref{eq:ActSat}.\medskip

\noindent \textbf{Exclusion Cones}: For what concerns the exclusion cone constraints, it is sufficient to note that, given $v\in\mathcal{R}_\delta$, constraints \eqref{eq:ExclusionCone} hold true for any attitude $q\in\quat$ such that $2\arccos\tilde{q}_R\leq\theta(v)$, where \begin{equation}\label{eq:MinDist}
\theta(v)=\min_{i\in\{1,\ldots,l\}}\left\{\arccos\left(h^T\!R(v)\,e_i\right)-\psi_i\right\}
\end{equation}
is the minimum angular distance between $v$ and the boundary of the exclusion cone constraints. Following the general approach in \cite{ERG_EL}, constraint satisfaction can thus be guaranteed by assigning the dynamic safety margin
\begin{equation}\label{eq:DSMexclusion}
\Delta_e(\tilde{q},\omega)=\kappa_e\left(\Gamma_e(v)-V(\tilde{q},\omega)\right),
\end{equation}
where 
\begin{equation}\label{eq:SatGamma}
\Gamma_e(v)=2k_P\left(1-\cos\tfrac{\theta(v)}{2}\right)
\end{equation}
can be interpreted as the potential energy associated to $\tilde{q}_R=\cos(\theta(v)/2)$. The intuition behind this choice is that, as long as \eqref{eq:DSMexclusion} remains non-negative, the closed-loop system will never have sufficient energy to violate the exclusion cone constraints. 

\begin{rmk}\label{rem:Unwinding}
An important consequence of equation \eqref{eq:MinDist} is that $\theta(v)$ is guaranteed to satisfy the upper bound $\theta(v)\leq\pi-\min\{\psi_i\}$. Since this entails $\Gamma_e(v)<2k_P$, the proposed constrained control strategy has the added effect of ensuring $(\tilde{q},\omega)\in\Omega$ at all times, thus preventing unwinding phenomena.
\end{rmk}\medskip

\begin{figure}
\centering
\includegraphics[width=0.75\columnwidth]{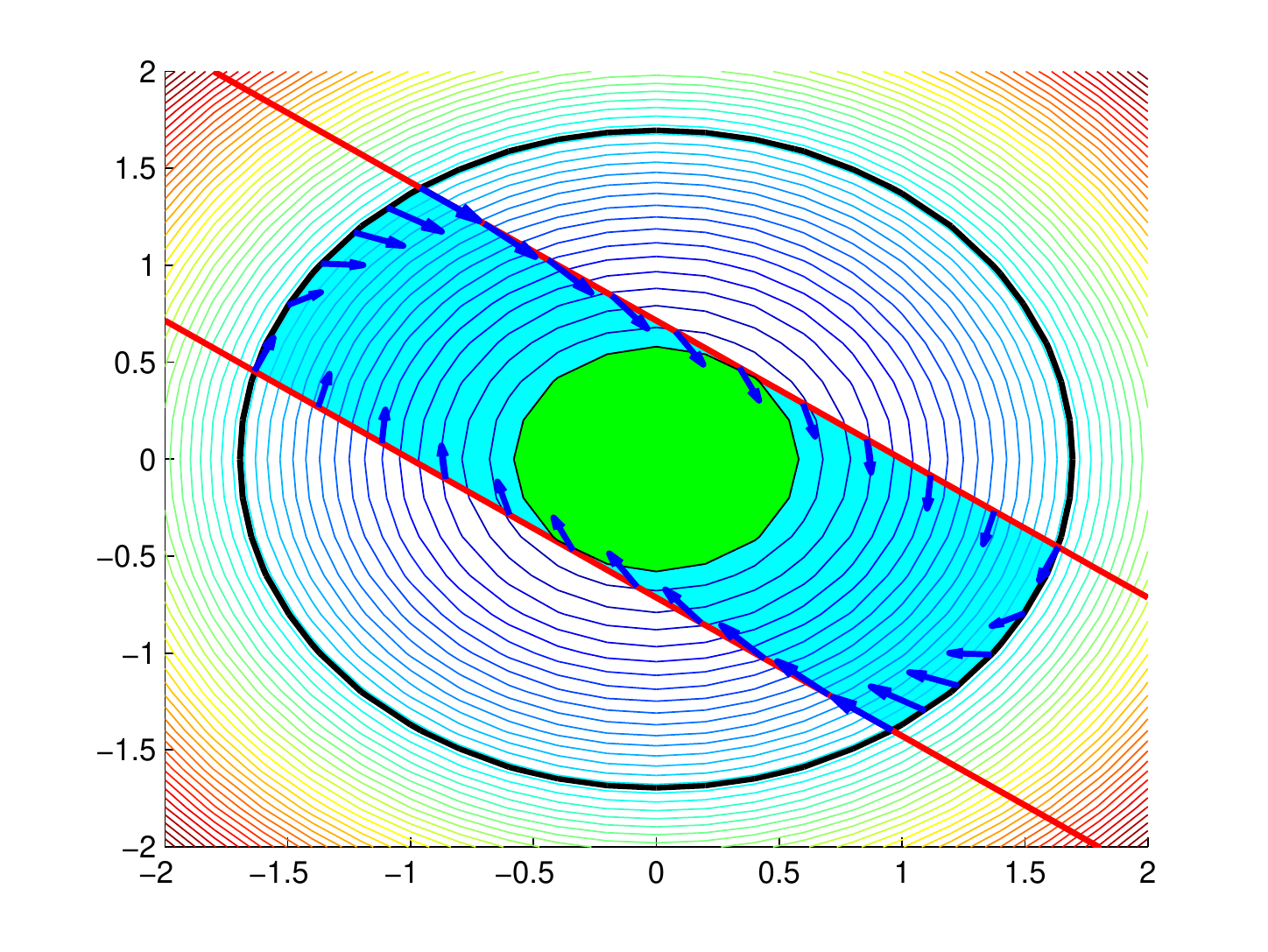}
\caption{Qualitative representation of the invariant set \eqref{eq:ActInvSet}, shown in cyan. The set boundaries are $\|\tau\|=\tau_{\max}$, in red, and $V=\Gamma_a$, in black. The invariance of the set can be deduced from the fact that the system trajectories point inward, thus satisfying the Nagumo theorem \cite{blanchini}. For the sake of comparison, the invariant set obtained by solving \eqref{eq:ActGamma_No} is reported in green; it is smaller and hence a dynamic safety margin based on it will lead to more conservative performance.}\label{fig:invariant}
\vspace{-0.5cm}
\end{figure}

\noindent \textbf{Actuator Saturation}: Following the general approach outlined in \cite{ERG_CSM}, the actuator saturation constraint \eqref{eq:ActSat} can be enforced by limiting the value of the Lyapunov function \eqref{eq:Lyapunov} so that the corresponding level-set does not exit the plane $k_P\tilde{q}_I+k_D\omega+\tau_{max}=0$. This can be done by solving the optimization problem 
\begin{equation}\label{eq:ActGamma_No}
\begin{cases}
\min & 2k_P(1-\tilde{q}_R)+\omega^TJ\omega\\
\text{s.t.} & \tilde{q}_R^2+\tilde{q}_I^T\tilde{q}_I=1\\
&\bigl[-k_P\tilde{q}_I-k_D\omega\bigr]_i\geq\tau_{\max,i}
\end{cases}
\end{equation}
and taking the minimum for $i=\{1,2,3\}$. Depending on the inertia matrix $J$ and the control gains $k_P$, $k_D$, however, this approach may lead to an excessively conservative response. In this paper we introduce a novel method for constructing a larger level-set to improve performance. We propose the dynamic safety margin
\begin{equation}\label{eq:ActDelta}
\Delta_a(\tilde{q},\omega)=\min\{\kappa_\tau(\tau_{\max}\!-\!\tau(\tilde{q},\omega)),\kappa_a(\Gamma_a\!-\!V(\tilde{q},\omega))\},
\end{equation}
where $\tau$ and $V$ are given in equations \eqref{eq: Control Law} and \eqref{eq:Lyapunov}, respectively, whereas $\Gamma_a$ is a positive scalar that can be computed by solving
\begin{equation}\label{eq:ActGamma}
\begin{cases}
\min & 2k_P(1-\tilde{q}_R)+\omega^TJ\omega\\
\text{s.t.} & \tilde{q}_R^2+\tilde{q}_I^T\tilde{q}_I=1\\
&\bigl[-k_P\tilde{q}_I-k_D\omega\bigr]_i=\tau_{\max,i}\\
&\bigl[-k_P\tilde{q}_I-k_D\omega\bigr]_j\leq\tau_{\max,j},\qquad j=\{1,2,3\}\setminus i\\
&\begin{bmatrix}
\frac12E(\tilde{q})\omega&
J^{-1}(-\hat{\omega}J\omega+\tau)
\end{bmatrix}_i\begin{bmatrix}
k_P\\
k_D
\end{bmatrix}\leq0,
\end{cases}
\end{equation}
where $[\cdot]_i$ denotes the ith row of a matrix, for $i=\{1,2,3\}$, and taking the smallest of the three solutions. As detailed\footnote{Due to the complexity of representing $\tilde{q}$ and $\omega$ in a 2D plot, Figure \ref{fig:invariant} was obtained using analogous considerations for the second-order system $\ddot{x}=\tau$ with $\tau=-k_Px-k_D\dot{x}$. Note that, although the numerical values are not indicative for SO(3), the overall behavior is very similar.} in Figure \ref{fig:invariant}, the idea behind equation \eqref{eq:ActGamma} is to define $\Gamma_a$ as the largest value of the Lyapunov function such that
\begin{equation}
\tau_i(\tilde{q},\omega)=\tau_{\max,i}~\wedge~V(\tilde{q},\omega)\leq\Gamma_a~~\Rightarrow~~ \nabla\tau_i\left[\begin{smallmatrix}
\dot{q}\\
\dot\omega
\end{smallmatrix}\right]\leq0,
\end{equation}
for $i=\{1,2,3\}$.  This property, combined with the time-decreasing nature of the Lyapunov function along the system trajectories, is sufficient to ensure that the set
\begin{equation}\label{eq:ActInvSet}
\{\tilde{q},\omega|~ \tau(\tilde{q},\omega)\!\in\![-\tau_{\max},\tau_{\max}]\}~\bigcap~\{\tilde{q},\omega|~V(\tilde{q},\omega)\!\leq\!\Gamma_a\}
\end{equation}
is forward invariant for any constant reference $v$. As a result, it follows from \cite{ERG_CSM} that \eqref{eq:ActDelta} is a dynamic safety margin since it measures the distance between the current state $(\tilde{q},\omega)$ and the boundary of the invariant set \eqref{eq:ActInvSet}.

% Likewise, the exclusion cone constraint is guaranteed to hold for any attitude error satisfying $\arg(\tilde{q})\leq\Delta v_e$, where $\Delta v_e=\arg(ve^*)-\psi$ is the minimum angular distance between $v$ and the edge of the exclusion cone $\mathcal{C}(e,\psi)$. 

% As a result, both the exclusion cone constraint and the inclusion cone constraint are satisfied if $\arg(\tilde{q})<\tilde\alpha_{\max}(v)$, where 
% \begin{equation}\label{eq: max attitude error}
% \tilde\alpha_{\max}(v)=\min(\Delta v_a,\Delta v_e).
% \end{equation}
% The corresponding threshold value is
% \begin{equation}\label{eq: Gamma, max attitude}
% \Gamma(v)=2k_P\left(1-\cos\left(\frac{\tilde\alpha_{\max}(v)}2\right)\right),
% \end{equation}
% which corresponds to the value of the Lyapunov function \eqref{eq:Lyapunov} for any $\tilde{q}$ such that $\arg(\tilde{q})=\tilde\alpha_{\max}(v)$. 

% \begin{rmk}
% Please note that, given either a non-trivial inclusion cone constraint (i.e. $\theta<\pi$) or a non-trivial exclusion cone constraint (i.e. $\psi>0$), it follows that $\tilde\alpha_{\max}(v)<\pi, ~\forall v\in\mathbb{H}$. As a result, the threshold value \eqref{eq: Gamma, max attitude} intrinsically enforces the requirement $\arg(\tilde{q})<\pi$ introduced at the end of Section \ref{sec: Control}.\sqr
% \end{rmk} 

\subsection{Navigation Field}
The objective of this section is to construct a vector field that, for any initial auxiliary reference $v(0)\in\mathcal{R}_\delta$, generates a steady-state admissible path that leads to the desired reference $r\in\mathcal{R}_\zeta$. To do so, the first step will be to account for the kinematic behavior of quaternions by defining
\begin{equation}\label{eq:NavigationField}
\rho(v,r)=E(v)\bigl(\rho_r(v,r)+\rho_e(v)+\rho_d(v,r)\bigr),
\end{equation}
where $E(v)\in\real^{4\times3}$ is defined in \eqref{eq:QuatDerivative}, $\rho_r:\mathcal{R}_\delta\times\mathcal{R}_\zeta\to\real^3$ is an attraction term that points from $v$ to $r$, $\rho_r:\mathcal{R}_\delta\to\real^3$ is a repulsion term that points away from the constraints, and $\rho_d:\mathcal{R}_\delta\times\mathcal{R}_\zeta\to\real^3$ is a destabilization term that prevents stagnation. Each of these terms will be addressed using a step-by-step analysis based on the artificial potential field approach \cite{APF}. 
\medskip

\noindent \textbf{Attraction Term}: To design the attraction term, consider the case in which the system is not subject to constraints.
In this scenario, a simple choice would be $\rho_r(v,r) = -\tilde{v}_I$, where $\tilde{v}=vr^*$ is the reference error. For the sake of modularity, however, in this paper we propose a different choice,
\begin{equation}\label{eq: Attraction Field}
\rho_r(v,r) = - \frac{\tilde{v}_I}{\max\{\|\tilde{v}_I\|,\sin\frac\eta2\}},
\end{equation}
where $\eta\in(0,\pi)$ is a (preferably small) smoothing angle. The advantage of \eqref{eq: Attraction Field} is that $\|\rho_r(v,r)\|=1$ for all $\tilde{\alpha}\in(\eta,\pi]$, where $\tilde\alpha:=2\arccos(\tilde{v}_R)$ is the angular error between $v$ and $r$. In the presence of exclusion cone constraints, the proposed attraction term can be combined with a repulsion term to ensure constraint satisfaction. Note that, under our assumptions, the input saturation constraints are always satisfied at steady-state, meaning that they do not require any additional terms.

\medskip
\noindent\textbf{Repulsion Term}: The objective of the repulsion term is to ensure that $v(t)$ does not exit the set of steady-state admissible values $\mathcal{R}_\delta$. By taking advantage of the fact that the attraction field satisfies $\|\rho_r(v,r)\|\leq1$, the proposed repulsion term has the form
\begin{equation}\label{eq: Repulsion Field, Exclusion Cone}
\rho_e(v) = \max\left\{\frac{\zeta-\theta(v)}{\zeta-\delta},0\right\}\frac{\hat{h}R(v)e_\mathcal{I}}{\|\hat{h}R(v)e_\mathcal{I}\|},
\end{equation}
where $\theta(v)$ is the minimum angular distance between $v$ and the edge of the closest exclusion cone, as detailed in \eqref{eq:MinDist}, and $\mathcal{I} \in \{1, \cdots, l\}$ is the associated index. Indeed, \eqref{eq: Repulsion Field, Exclusion Cone} satisfies $\|\rho_e(v)\|=1$ for $\theta(v)=\delta$, thus making it easier to study in relationship to $\rho_r(v,r)$. Equation \eqref{eq: Repulsion Field, Exclusion Cone} is also designed so that $\|\rho_e(v)\|=0$ for $\theta(v)\geq\zeta$. Due to Assumption \ref{ass:DistanceMargin}, this ensures that the effect of a given exclusion cone is always limited to its direct influence region. For the reader's convenience, the role of $\zeta,\eta,\delta$ is detailed in Figure \ref{fig:NavField}. 
% 
%\begin{rmk}
%Note that, whenever $\theta(v)\geq\zeta$, equation \eqref{eq: Repulsion Field, Exclusion Cone} entails $\rho_e(v)=0$. Moreover, it follows from Assumption  that, for any $\theta(v)\leq\zeta$ there exists one and only one $i\in\{1,\ldots,l\}$ such that $h^T\!R(v)e_i\geq\cos(\psi_i+\zeta)$. This ensures that $\mathcal{I}$ is uniquely defined whenever $\rho_e(v)\neq0$. It is also worth noting that the denominator in equation \eqref{eq: Repulsion Field, Exclusion Cone} is guaranteed to be non-zero as long as the exclusion cone constraints \eqref{eq:ExclusionCone} are satisfied. \sqr
%\end{rmk}
% 
The following proposition details the asymptotic behavior of the navigation field, $\rho(v,r)=E(v)(\rho_r(v,r)+\rho_e(v))$, for the special case in which there is only one exclusion cone constraint.

\begin{figure}
\centering
\includegraphics[width=0.55\columnwidth]{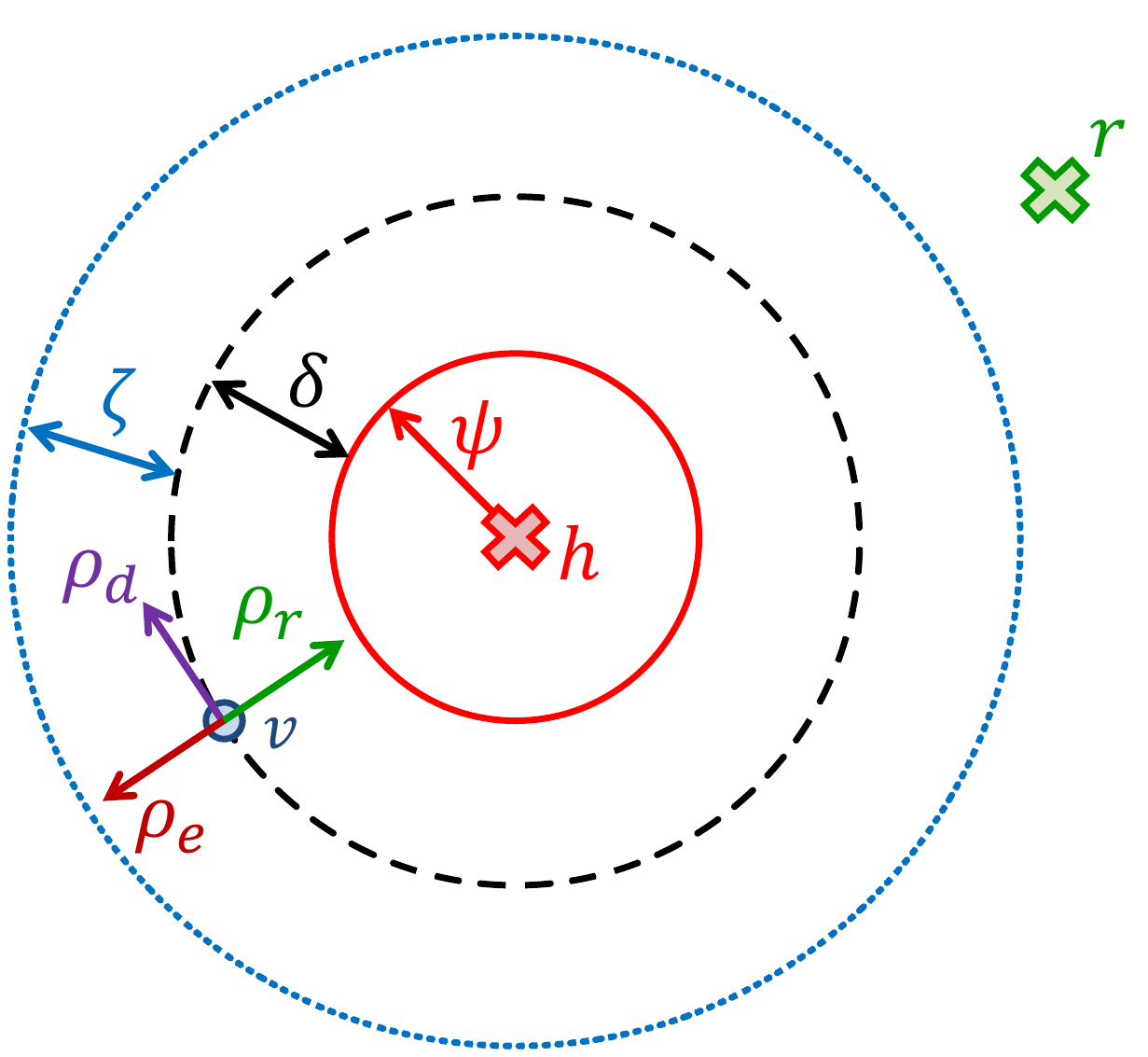}
\caption{Qualitative representation in $\real^2$ of the exclusion cone constraint and the various components in the navigation field \eqref{eq:NavigationField}. The red circle identifies the constraint boundary \eqref{eq:ExclusionCone}. The blue dotted line is the boundary of \eqref{eq:Unaffected}, which depends on the influence margin $\zeta$. The black dashed line is the boundary of \eqref{eq:SSAdmissible}, which depends on the static safety margin $\delta$. Within the influence region, $\rho_r$ is a term that points from $v$ to $r$, $\rho_e$ is a term that points away from the center of the exclusion cone constraint, and $\rho_d$ is a term that is always tangent to the constraint and does not increase the distance between $v$ and $r$. By construction, $\rho_d$ and $\rho_e$ are zero outside the influence margin $\zeta$ and have a modulus of one whenever the angular distance between $v$ and $r$ is equal to the static safety margin $\delta$.}\label{fig:NavField}
\end{figure}

\begin{prp}\label{prp: Exclusion}
Given a single exclusion cone constraint \eqref{eq:ExclusionCone}, i.e. $i=l=1$, let $\zeta\in(0,\pi)$ be an influence margin, let $\delta\in(0,\zeta)$ be a static safety margin, and let $\eta\in(0,\delta)$. Then, the system
\begin{equation}\label{eq:AttractionField_Conservative}
2\dot{v}=E(v)(\rho_r+\rho_e),
\end{equation}
with $\rho_r$ and $\rho_e$ given in \eqref{eq: Attraction Field} and \eqref{eq: Repulsion Field, Exclusion Cone} is such that, for any constant reference $r\in\mathcal{R}_\zeta$, the following properties hold:
\begin{enumerate}[1.]
\item The set $\mathcal{R}_\delta$ is forward-invariant;
\item The equilibrium point $s_1\in\quat$ satisfying
\begin{subequations}\label{eq:SaddleConditions}
\begin{eqnarray}
h^T\!R(s_1)e_1&=&\cos(\psi_1+\delta),\label{eq:SaddleDistance}\\
\frac{\hat{h}R(s_1)e_1}{\|\hat{h}R(s_1)e_1\|}\!\!&=&-\frac{\text{Im}(s_1r^*)}{\|\text{Im}(s_1r^*)\|},\label{eq:SaddleDirection}
\end{eqnarray}
\end{subequations}
is asymptotically stable with respect to the manifold 
\begin{equation}\label{eq:SaddleManifold}
\mathcal{S}_1:=\left\{v\!\in\!\quat~\left|~\begin{array}{rcl}
h^T\!R(v)e_1&\leq&\cos\psi_1\\
\displaystyle\frac{\hat{h}R(v)e_1}{\|\hat{h}R(v)e_1\|}\!\!&=&-\displaystyle\frac{\text{Im}(vr^*)}{\|\text{Im}(vr^*)\|}
\end{array}\right.\right\};
\end{equation}
\item The equilibrium point $v=r$ is asymptotically stable and admits $\mathcal{R}_\delta\setminus\mathcal{S}_1$ as a basin of attraction.
\end{enumerate}
\end{prp}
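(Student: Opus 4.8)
The plan is to regard \eqref{eq:AttractionField_Conservative} as a kinematic system on $\quat$ driven by the angular velocity $\omega_v:=\rho_r+\rho_e$ and to establish the three claims in turn, using two organizing facts. First, $E(v)^{T}E(v)=I_3$, so $E(v)\omega_v=0$ if and only if $\omega_v=0$; hence the equilibria of \eqref{eq:AttractionField_Conservative} are exactly the zeros of $\rho_r+\rho_e$. Second, the constraint margin $\theta(v)$ of \eqref{eq:MinDist} obeys a clean scalar ODE: differentiating $h^TR(v)e_1=\cos(\psi_1+\theta)$ along \eqref{eq:AttractionField_Conservative} and using $\|\hat hR(v)e_1\|=\sin(\psi_1+\theta)$ gives
\[ \dot\theta=\|\rho_e\|+\frac{\rho_r^{T}\,\hat h R(v)e_1}{\sin(\psi_1+\theta)}, \]
where the repulsion, being aligned with $\hat hR(v)e_1$, enters at its full magnitude $\|\rho_e\|=\max\{\tfrac{\zeta-\theta}{\zeta-\delta},0\}$ from \eqref{eq: Repulsion Field, Exclusion Cone}, and the attraction satisfies $|\rho_r^{T}\hat hR(v)e_1|\le\|\rho_r\|\sin(\psi_1+\theta)$. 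For \emph{Property~1} I would evaluate this identity on $\partial\mathcal{R}_\delta$, where $\theta=\delta$ and hence $\|\rho_e\|=1$; since $\|\rho_r\|\le1$ by construction of \eqref{eq: Attraction Field}, it follows that $\dot\theta\ge1-\|\rho_r\|\ge0$ there, so the field points into $\mathcal{R}_\delta$ and forward invariance follows from Nagumo's theorem \cite{blanchini}.

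For \emph{Property~2}, the first step is to show that $\mathcal{S}_1$ is invariant. On $\mathcal{S}_1$ the direction condition \eqref{eq:SaddleDirection} forces $\rho_e$ and $\rho_r$ to be collinear, both lying along $\hat hR(v)e_1$, i.e.\ along the normal to the plane containing $h$, $R(v)e_1$, and $R(r)e_1$. Rotating $v$ about this axis keeps these three vectors on a common great circle and preserves \eqref{eq:SaddleDirection}, so $v$ cannot leave $\mathcal{S}_1$. Restricting \eqref{eq:AttractionField_Conservative} to $\mathcal{S}_1$ then reduces it to the scalar dynamics $\dot\theta=\|\rho_e\|-\|\rho_r\|$: the repulsion drives $\theta$ up, while the attraction, pulling toward the antipodally located $r$ across the exclusion cone, drives $\theta$ down. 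Since $v$ is far from $r$ along $\mathcal{S}_1$ we have $\tilde\alpha>\eta$, so $\|\rho_r\|=1$ and $\dot\theta=\max\{\tfrac{\zeta-\theta}{\zeta-\delta},0\}-1$, which is positive for $\theta<\delta$, negative for $\theta>\delta$, and zero at $\theta=\delta$; this identifies $s_1$ and proves it asymptotically stable for the reduced dynamics.

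For \emph{Property~3}, I would take the Lyapunov candidate $W(v)=1-\tilde v_R$, which is nonnegative on $\mathcal{R}_\delta$ and vanishes only at $v=r$. From $2\dot{\tilde v}_R=-\omega_v^{T}\tilde v_I$ one obtains $\dot W=\tfrac12(\rho_r+\rho_e)^{T}\tilde v_I$. The attraction gives $\tfrac12\rho_r^{T}\tilde v_I=-\tfrac12\|\tilde v_I\|^2/\max\{\|\tilde v_I\|,\sin\tfrac\eta2\}\le0$, while $\tfrac12\rho_e^{T}\tilde v_I\le\tfrac12\|\rho_e\|\|\tilde v_I\|$; since $\|\rho_e\|\le1$, this yields $\dot W\le0$ throughout $\mathcal{R}_\delta$, with equality only at $v=r$ and at the isolated saddle $s_1$ (where $\|\rho_e\|=1$ and the repulsion is exactly opposed to the attraction). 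LaSalle's invariance principle on the compact invariant set $\mathcal{R}_\delta$ then confines every limit set to $\{r,s_1\}$. Combining this with Property~2, which shows $\mathcal{S}_1$ is the stable set of $s_1$, and with the local Lyapunov property of $W$ near $r$ (where $\rho_e\equiv0$), I would conclude that every trajectory started in $\mathcal{R}_\delta\setminus\mathcal{S}_1$ converges to $r$.

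The hard part will be the interface between Properties~2 and~3: rigorously certifying that $\mathcal{S}_1$ is the \emph{entire} stable set of $s_1$ and that $s_1$ is transversally unstable, so that trajectories off $\mathcal{S}_1$ are genuinely repelled toward $r$ rather than lingering, and then making the LaSalle argument precise on the compact manifold $\quat$. Secondary care is needed to treat the smoothing ball $\|\tilde v_I\|<\sin\tfrac\eta2$ separately, to verify that $\dot W=0$ holds only at the claimed isolated equilibria, and to dispose of the antipodal point $v=-r$, which is an additional measure-zero repeller that must either be excluded from the basin or absorbed by the destabilization term $\rho_d$. Throughout, the kinematic sign bookkeeping relating $E(v)$, $\dot R(v)$, and $\tilde v_I$ must be carried out consistently with the paper's quaternion conventions.
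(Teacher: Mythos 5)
Your overall architecture is legitimate and genuinely different from the paper's: where the paper treats \eqref{eq:AttractionField_Conservative} as a gradient flow of a single combined potential $P=P_r+P_e$ (see \eqref{eq: Attraction Potential}--\eqref{eq: Exclusion Antigradient}) and derives all three properties from the monotone decrease of $P$, you split the work into a scalar ODE for $\theta(v)$ plus Nagumo (Point~1), a manifold-reduction argument (Point~2), and a LaSalle argument with $W=1-\tilde v_R$ (Point~3). Points~1 and~2 are essentially sound this way (Point~1 is in substance the same boundary estimate $\|\rho_e\|=1\ge\|\rho_r\|$ that the paper uses, and your reduced dynamics $\dot\theta=\|\rho_e\|-\|\rho_r\|$ on $\mathcal{S}_1$ delivers the stability of $s_1$ relative to the manifold that Point~2 actually asserts).

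The genuine gap is in Point~3, and it sits exactly where you wave at ``the hard part.'' LaSalle with $W=1-\tilde v_R$ only confines $\omega$-limit sets to $\{r,s_1\}$; to obtain the basin $\mathcal{R}_\delta\setminus\mathcal{S}_1$ you must still prove that no trajectory starting off $\mathcal{S}_1$ converges to $s_1$, and your chosen Lyapunov function cannot certify this: a perturbation $w s_1$ with $w_I\perp\hat hR(s_1)e_1$ moves $v$ \emph{farther} from $r$ (the transverse displacement increases $\arg(vr^*)$ at second order), so $W$ \emph{increases} under exactly the perturbation that is supposed to escape the saddle. The repulsion from $s_1$ is not visible in the attraction distance alone; it comes from the trade-off between moving away from $r$ and moving away from the cone. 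This is the content of the paper's computation \eqref{eq:Variation_Pr}--\eqref{eq:Phi}: the transverse perturbation increases $P_r$ by $\Phi(\arg(s_1r^*))$ but decreases $P_e$ by $\Phi(\vartheta(s_1))+O(\varepsilon^2)$, and since $r\in\mathcal{R}_\zeta$ forces $\arg(s_1r^*)>\vartheta(s_1)$ and $\Phi$ is strictly decreasing, the combined potential strictly drops below $P(s_1,r)$; monotonicity of $P$ along trajectories then forbids returning to $s_1$. You would need to either import this combined-potential computation (at which point you are back to the paper's proof) or supply an equivalent transverse-instability argument (e.g.\ a linearization/center-manifold analysis at $s_1$), neither of which appears in your proposal. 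Until that step is filled, the basin claim in Point~3 is unproven. The secondary issues you flag (the frame bookkeeping in $2\dot{\tilde v}_R$, where a factor $R(r)$ is silently dropped, and the treatment of $v=-r$) are real but minor by comparison.
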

 
\begin{proof}
See Appendix \ref{app:Conservative}.
\end{proof}

The main limitation of Proposition \ref{prp: Exclusion} is that, even if the reference is strictly steady-state admissible and there is only one exclusion cone constraint, the trajectories of $v$ resulting from the proposed attraction field are not guaranteed to converge to $r$ for any admissible initial condition $v$. This is due to the presence of the saddle point $s_1$ which is characterized by an attraction basin of dimension zero. In the presence of multiple exclusion cones, this issue becomes even more problematic due to the well-known limitation of conservative vector fields, which always contain at least as many stagnation points as the number of holes in the domain \cite{APF2}. This issue will be addressed in the following paragraph. \medskip

\noindent\textbf{Saddle Destabilization Term}: The objective of the destabilization term is to eliminate the presence of saddle points without penalizing the convergence properties. To do so, the proposed vector field, $\rho_d(v,r)$, must satisfy the following properties:
\begin{itemize}
\item $\rho_d$ must be \textbf{non-conservative}, otherwise it will be unable to overcome the intrinsic limitations of conservative vector fields;
\item $\|\rho_d\|=0$ whenever $\theta(v)>\zeta$. This will ensure that the effects of the destabilization field are local;
\item $\rho_d^T\rho_r=0$, to avoid pointing away from the reference $r$;
\item $\rho_d^T\rho_e=0$, to avoid pointing towards the constraints \eqref{eq:ExclusionCone}.
\end{itemize}
Based on these criteria, the following is proposed:
\begin{equation}\label{eq: Destabilization Field}
\rho_d(v,r) \!=\! \max\!\left\{\!\frac{\zeta-\theta(v)}{\zeta-\delta}\frac{-\rho_r(v,r)^T\hat{h}R(v)e_\mathcal{I}}{\|\hat{h}R(v)e_\mathcal{I}\|},0\!\right\}\,\varphi,
\end{equation}
where the vector $\varphi\in\unit^3$ must satisfy
\begin{equation}\label{eq:DestabilizationDirection}
\varphi \in \textrm{Null}\left(\begin{bmatrix}
\rho_e^T(v)\\
\rho_r^T(v,r)
\end{bmatrix}\right)\!.
\end{equation}
Note that $\varphi$ is uniquely defined everywhere except the manifold \eqref{eq:SaddleManifold}, where $\rho_r(v,r)$ is parallel to $\rho_e(v)$. In this case, the element of the null space can be chosen randomly and leads to a discontinuous, and thus non-conservative, vector field $\rho_d(v,r)$. The following proposition states that \eqref{eq: Destabilization Field} is successful at destabilizing the saddle points, thus making the desired reference $r$ globally asymptotically stable with respect to the domain $\mathcal{R}_\delta$.

\begin{prp}\label{prp: Destabilization}
Given a single exclusion cone constraint \eqref{eq:ExclusionCone}, i.e. $i=l=1$, let $\zeta\in(0,\pi)$ be an influence margin, let $\delta\in(0,\zeta)$ be a static safety margin, and let $\eta\in(0,\delta)$. Then, the system
\begin{equation}\label{eq:AttractionField_Global}
2\dot{v}=E(v)(\rho_r(v,r)+\rho_e(v)+\rho_d(v,r)),
\end{equation}
with $\rho_r$, $\rho_e$, and $\rho_d$ given in \eqref{eq: Attraction Field}, \eqref{eq: Repulsion Field, Exclusion Cone}, and \eqref{eq: Destabilization Field} is such that:
\begin{enumerate}[1.]
\item The set $\mathcal{R}_\delta$ is forward-invariant;
\item For any constant reference $r\in\mathcal{R}_\zeta$, the equilibrium point $v=r$ is globally asymptotically stable with respect to the domain $\mathcal{R}_\delta$.
\end{enumerate}
\end{prp}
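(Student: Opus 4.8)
The plan is to build on Proposition~\ref{prp: Exclusion}, treating the augmented field in \eqref{eq:AttractionField_Global} as the field $\rho_r+\rho_e$ of Proposition~\ref{prp: Exclusion} perturbed by the transverse term $\rho_d$. The two defining orthogonality properties $\rho_d^T\rho_e=0$ and $\rho_d^T\rho_r=0$ will do most of the work: the first decouples $\rho_d$ from the constraint dynamics, yielding Part~1 almost for free, while the second decouples it from the distance-to-target dynamics, so that $\rho_d$ can only destabilize spurious equilibria without slowing convergence.

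For Part~1, I would differentiate the constraint function and observe that, along \eqref{eq:AttractionField_Global}, $\tfrac{d}{dt}\!\left(h^T\!R(v)e_1\right)=-u^T\hat{h}R(v)e_1$, where $u=\rho_r+\rho_e+\rho_d$. Since $\rho_e$ is aligned with $\hat{h}R(v)e_1$ by construction and $\rho_d^T\rho_e=0$, the destabilization term contributes nothing to this derivative. Hence on $\partial\mathcal{R}_\delta$ the component of the flow normal to the constraint is identical to the one analyzed in Proposition~\ref{prp: Exclusion}, and forward invariance of $\mathcal{R}_\delta$ carries over verbatim.

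For Part~2, the first step is an equilibrium count. At any equilibrium $u=0$; because $\rho_d$ is orthogonal to both $\rho_r$ and $\rho_e$, taking norms in $(\rho_r+\rho_e)+\rho_d=0$ forces $\rho_r+\rho_e=0$ and $\rho_d=0$ \emph{separately}. By Proposition~\ref{prp: Exclusion} the only solutions of $\rho_r+\rho_e=0$ in $\mathcal{R}_\delta$ are $v=r$ and the saddle $v=s_1$. At $v=r$ one has $\theta(r)\geq\zeta$, so $\rho_e=\rho_d=0$ and $r$ remains an equilibrium whose local asymptotic stability from Proposition~\ref{prp: Exclusion} is preserved, the attraction term dominating in a neighborhood. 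At $v=s_1$, by definition $\rho_r$ and $\rho_e$ are antiparallel with unit magnitude (that is why they cancel), so the scalar gain in \eqref{eq: Destabilization Field} equals one and $\rho_d(s_1)\neq0$; thus $s_1$ is \emph{not} an equilibrium of the augmented system, and $r$ is the unique equilibrium in $\mathcal{R}_\delta$.

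The final step is convergence. Using $W(v)=1-\tilde{v}_R$ with $\tilde{v}=vr^*$, a short computation gives $\dot{W}=\tfrac12\,u^T\tilde{v}_I$; since $\rho_r$ is parallel to $\tilde{v}_I$ and $\rho_d^T\rho_r=0$, the term $\rho_d^T\tilde{v}_I$ vanishes and $\dot{W}$ is identical to the one governing Proposition~\ref{prp: Exclusion}. The convergence argument of Proposition~\ref{prp: Exclusion} therefore applies on $\mathcal{R}_\delta\setminus\mathcal{S}_1$, driving every such trajectory to $r$. It remains to handle trajectories meeting the saddle manifold $\mathcal{S}_1$ of \eqref{eq:SaddleManifold}, where $\rho_d$ is discontinuous, of unit magnitude, and directed along $\varphi\perp\mathrm{span}\{\rho_r,\rho_e\}$, hence transversally to $\mathcal{S}_1$. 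I would show, in the Filippov sense, that this transverse push renders $\mathcal{S}_1$ non-invariant and repelling, so that any solution reaching it is immediately carried into $\mathcal{R}_\delta\setminus\mathcal{S}_1$ and thereafter converges to $r$. Together with the uniqueness of the equilibrium, this yields global asymptotic stability of $v=r$ with respect to $\mathcal{R}_\delta$. I expect this last point to be the main obstacle: one must argue rigorously, via a differential-inclusion/Filippov analysis, that the non-conservative drift genuinely destroys the stable manifold of the former saddle so that \emph{no} trajectory converges to $s_1$, thereby upgrading the almost-global result of Proposition~\ref{prp: Exclusion} to the global statement claimed here.
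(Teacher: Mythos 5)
Your proposal is correct and follows essentially the same route as the paper: orthogonality of $\rho_d$ to $\rho_e$ preserves forward invariance, orthogonality to $\rho_r+\rho_e$ rules out new equilibria while $\rho_d(s_1,r)\neq0$ removes the saddle, and the discontinuity manifold is argued to be repelling so that no sliding modes or limit cycles arise. The only cosmetic difference is that the paper runs the monotonicity argument on the potential $P=P_r+P_e$ (via $\nabla P^T\rho_d=0$) rather than on $W=1-\tilde{v}_R$, whose derivative is not sign-definite on its own; since you also establish $\rho_d\perp\rho_e$, your argument transfers directly.
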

 
\begin{proof}
See Appendix \ref{app:Global}.
\end{proof}
 
Proposition \ref{prp: Destabilization} states that, in the presence of a single exclusion cone constraint, the proposed vector field results in a trajectory that asymptotically tends to $r\in\mathcal{R}_\zeta$ for any initial condition $v(0)\in\mathcal{R}_\delta$. The following corollary extends this result to the case of multiple exclusion cone constraints.

\begin{cor}\label{cor:NF}
Given the exclusion cone constraints \eqref{eq:ExclusionCone}, let the influence margin $\zeta\in(0,\pi)$ satisfy Assumption \ref{ass:DistanceMargin}, let $\delta\in(0,\zeta)$ be a static safety margin, and let $\eta\in(0,\delta)$. Then, the system \eqref{eq:AttractionField_Global} with $\rho_r$, $\rho_e$, and $\rho_d$ given in \eqref{eq: Attraction Field}, \eqref{eq: Repulsion Field, Exclusion Cone}, and \eqref{eq: Destabilization Field} is such that:
\begin{enumerate}[1.]
\item The set $\mathcal{R}_\delta$ is forward-invariant;
\item For any constant reference $r\in\mathcal{R}_\zeta$, the equilibrium point $v=r$ is GAS with respect to the domain $\mathcal{R}_\delta$.
\end{enumerate}
\end{cor}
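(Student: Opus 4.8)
The plan is to reduce the multi-cone problem to a finite family of decoupled single-cone problems, each already settled by Proposition~\ref{prp: Destabilization}, and then to glue the local analyses into a global statement on $\mathcal{R}_\delta$. The key is a separation property implied by Assumption~\ref{ass:DistanceMargin}. Writing $\theta_i(v):=\arccos(h^T\!R(v)e_i)-\psi_i$ for the per-cone angular margin, so that $\theta(v)=\min_i\theta_i(v)$ in \eqref{eq:MinDist}, I would first define the influence region of cone $i$ as the set where $\theta_i(v)<\zeta$ and show that these regions are pairwise disjoint. Indeed, if some $v$ were influenced by both cones $i$ and $j$, then $R(v)e_i$ and $R(v)e_j$ would lie within spherical distances $\psi_i+\zeta$ and $\psi_j+\zeta$ of $h$, respectively; the spherical triangle inequality would then bound the geodesic distance between $R(v)e_i$ and $R(v)e_j$ by $\psi_i+\psi_j+2\zeta$. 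Since $R(v)$ is a rotation this distance equals $\arccos(e_i^Te_j)$, which by Assumption~\ref{ass:DistanceMargin} strictly exceeds $\psi_i+\psi_j+2\zeta$, a contradiction. Consequently at most one index is active at any $v$: when $\theta(v)<\zeta$ the minimizer $\mathcal{I}$ is unique and $\rho_e,\rho_d$ coincide exactly with their single-cone counterparts for that cone, while when $\theta(v)\geq\zeta$ one has $\rho_e=\rho_d=0$ and the field reduces to the pure attraction term $E(v)\rho_r$.

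Forward-invariance of $\mathcal{R}_\delta$ then follows locally. Since $\delta<\zeta$ and the influence regions are disjoint, the boundary $\partial\mathcal{R}_\delta$ decomposes into the disjoint pieces on which $\theta_i(v)=\delta$, each contained in the influence region of a single cone $i$. On each such piece the closed-loop field is identical to the single-cone field of \eqref{eq:AttractionField_Global}, so part~1 of Proposition~\ref{prp: Destabilization} applies verbatim to certify that the flow does not cross $\partial\mathcal{R}_\delta$; collecting the pieces yields forward-invariance for the full system.

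For global asymptotic stability I would reuse the Lyapunov function of Proposition~\ref{prp: Destabilization}, namely $1-\tilde{v}_R$, a monotone measure of the angular distance between $v$ and $r$, and analyze its evolution region by region. Where $\theta(v)\geq\zeta$ the motion is driven solely by $\rho_r$, which points from $v$ toward $r$ and makes $1-\tilde{v}_R$ strictly decrease away from $v=r$; hence the exterior of the influence regions contains no equilibrium other than $r$, which lies there because $r\in\mathcal{R}_\zeta$. Where $\theta(v)<\zeta$ the dynamics coincide with the single-cone system, so the only critical configuration is the per-cone saddle $s_i$ defined by the analogue of \eqref{eq:SaddleConditions}, rendered non-attractive by $\rho_d$ through the same discontinuity argument as in Proposition~\ref{prp: Destabilization}. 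Because the influence regions are disjoint the per-cone saddles do not interact, and the equilibrium set of the full system is the union of $\{r\}$ with the destabilized saddles $\{s_1,\ldots,s_l\}$, together with the antipodal equilibrium inherited from the pure-attraction analysis, which is non-attractive as in the single-cone case. A LaSalle-type argument combined with the strict decrease of $1-\tilde{v}_R$ outside the measure-zero stable manifolds of the $s_i$ then shows that every trajectory starting in $\mathcal{R}_\delta$ converges to $r$, establishing the claimed GAS.

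The step I expect to be the main obstacle is excluding trajectories that repeatedly enter distinct influence regions and thereby escape the purely local single-cone analysis. I would close this gap by exploiting that $1-\tilde{v}_R$ decreases strictly along every exterior arc, so each excursion into an influence region is entered from, and (unless it converges to $r$) left into, a strictly lower level set. Combined with the disjointness of the influence regions and the finiteness of $l$, this rules out both oscillatory capture between cones and the appearance of limit cycles, leaving $v=r$ as the unique attractor on $\mathcal{R}_\delta$.
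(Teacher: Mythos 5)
Your overall strategy is the same as the paper's: use Assumption \ref{ass:DistanceMargin} to show that the influence regions of the $l$ cones are pairwise disjoint, conclude that at most one cone is ever active so that the field locally coincides with the single-cone field of Proposition \ref{prp: Destabilization}, and then glue the local analyses with a global monotonicity argument that excludes cycling between cones. Your spherical-triangle-inequality argument for disjointness is in fact more explicit than the paper, which simply asserts $\mathcal{D}_i\cap\mathcal{D}_j=\emptyset$ as a consequence of Assumption \ref{ass:DistanceMargin}; that part is a welcome addition.

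There is, however, one concrete gap in the gluing step. You take $1-\tilde{v}_R$ as the globally decreasing quantity, attributing it to Proposition \ref{prp: Destabilization}; but the Lyapunov-like function used there (and in the paper's proof of this corollary) is the potential $P(v,r)=P_r(v,r)+P_e(v)$ from \eqref{eq: Attraction Potential}--\eqref{eq: Repulsion Potential, Exclusion Cone}, which satisfies $\dot P=-\|\nabla P\|^2\leq0$ everywhere because $\rho_d$ is orthogonal to $\nabla P$. The quantity $1-\tilde{v}_R$ is only shown (by you) to decrease on the exterior arcs where $\theta(v)\geq\zeta$; inside an influence region the repulsion term can have a component pointing away from $r$, so monotonicity of $1-\tilde{v}_R$ there requires a separate estimate you do not provide. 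Consequently your key inequality---that each excursion into an influence region is \emph{left} into a strictly lower level set of $1-\tilde{v}_R$ than the one it was entered from---does not follow from the stated premise that the function decreases along exterior arcs: the level could in principle rise during the excursion itself. The repair is exactly the paper's device: since $P$ is monotone along trajectories and $P_e$ vanishes on the boundary of every influence region, $P$ reduces to $P_r$ (a strictly increasing function of the angular error) at both entry and exit points, which yields the ordering of levels you need and, together with disjointness and finiteness of the influence regions and the no-limit-cycle argument for the discontinuity of $\rho_d$, completes the proof.
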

 
\begin{proof}
See Appendix \ref{app:Corollary}.
\end{proof}

\section{Main Result}
The following theorem combines all the previous results to formulate a suitable constrained control strategy.

\begin{trm}
Given system \eqref{eq:Satellite} subject to the actuator saturation \eqref{eq:ActSat_Original} and the exclusion cone constraint \eqref{eq:ExclusionCone}, let \eqref{eq: Control Law} with $k_P,k_D>0$ be the primary control layer, and let \eqref{eq:ERG} be the navigation layer subject to the dynamic safety margin \eqref{eq:DSM}, with \eqref{eq:DSMexclusion} and \eqref{eq:ActDelta}, and the navigation field \eqref{eq:AttractionField_Global}, with \eqref{eq: Attraction Field}, \eqref{eq: Repulsion Field, Exclusion Cone}, and \eqref{eq: Destabilization Field}. Then, under Assumption \ref{ass:DistanceMargin} and for any initial condition such that there exists an auxiliary reference $v(0)\in\mathcal{R}_\delta$ satisfying $\Delta(\tilde{q}(0),\omega(0))\geq0$, the following hold:
\begin{enumerate}[1.]
\item For any reference $r(t)$, the system constraints are satisfied;
\item For any constant reference $r\in\mathcal{R}_\zeta$, the system trajectories will asymptotically converge to the equilibrium point $q=r$, $\omega=0$.
\end{enumerate}
\end{trm}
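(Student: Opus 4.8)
The plan is to read the final theorem as an application of the Explicit Reference Governor framework of \cite{ERG_CSM}, transported to $\quat$, whose three ingredients have all been supplied earlier in the paper. I would therefore first assemble the hypotheses: the prestabilizing feedback \eqref{eq: Control Law} renders $(q,\omega)=(v,0)$ asymptotically stable for every constant $v$ by Proposition \ref{prp:Control}; the scalar $\Delta$ in \eqref{eq:DSM}, built from \eqref{eq:DSMexclusion} and \eqref{eq:ActDelta}, is a valid dynamic safety margin, i.e. it enjoys recursive feasibility, forward invariance, and strong returnability by the construction of Section \ref{ssec:DSM}; and $\rho$ in \eqref{eq:NavigationField} is an admissible navigation field on $\mathcal{R}_\delta$ by Corollary \ref{cor:NF}. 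With these three facts in hand, both conclusions follow from the general ERG argument, and the body of the proof is the verification that the coupling $\dot v=\Delta\rho$ inherits the desired behavior.

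For the first claim I would show that $\{(\tilde q,\omega)\,|\,\Delta(\tilde q,\omega)\geq 0\}$ is forward invariant for the closed loop \eqref{eq:Satellite}, \eqref{eq: Control Law}, \eqref{eq:ERG}. By recursive feasibility, the standing hypothesis $\Delta(\tilde q(0),\omega(0))\geq 0$ with $v(0)\in\mathcal{R}_\delta$ already guarantees that \eqref{eq:ActSat_Original}--\eqref{eq:ExclusionCone} hold at $t=0$ and would persist if $v$ were frozen. The only way to leave the set is to cross $\Delta=0$; but there the governor \eqref{eq:ERG} gives $\dot v=\Delta\rho=0$, so $v$ is momentarily constant and the forward-invariance property of the DSM yields $\dot\Delta\geq 0$. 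Hence $\Delta$ never becomes negative, $v(t)$ stays in $\mathcal{R}_\delta$ (its forward invariance under $\rho$ is exactly part 1 of Corollary \ref{cor:NF}, and $\Delta\geq0$ only rescales the speed along the same path), and the constraints are satisfied for all $t\geq 0$ and for any, possibly time-varying, $r(t)$.

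For the second claim, with a constant $r\in\mathcal{R}_\zeta$, I would exploit that $\Delta\geq 0$ turns \eqref{eq:ERG} into a nonnegative time reparametrization of the pure navigation flow \eqref{eq:AttractionField_Global}: the $v$-trajectory traces the same admissible path made globally attractive in Corollary \ref{cor:NF}, merely at a state-dependent speed. Since $v(t)$ lives in the compact set $\mathcal{R}_\delta$, I would invoke an invariance argument on the cascade $(\tilde q,\omega,v)$, whose $\omega$-limit set lies in $\{\dot v=0\}=\{\Delta=0\}\cup\{\rho=0\}$. The field vanishes only at $v=r$, the unique attractor of Corollary \ref{cor:NF}; and $\Delta\equiv 0$ cannot persist at a fixed strictly steady-state-admissible $v$, because freezing $v$ lets the fast subsystem settle to $\tilde q_R=1$, $\omega=0$ by Proposition \ref{prp:Control}, whereupon strong returnability forces $\Delta$ to a strictly positive value, a contradiction. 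Thus the limit set reduces to $v=r$; since $v(t)\to r$ is then eventually constant and Remark \ref{rem:Unwinding} guarantees $(\tilde q,\omega)\in\Omega$ throughout, a standard converging-input/converging-state argument with Proposition \ref{prp:Control} delivers $(q,\omega)\to(r,0)$.

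The step I expect to be the main obstacle is making that last invariance argument fully rigorous, since $\Delta$ depends on the fast states $(\tilde q,\omega)$ while driving the slow reference $v$, so the two cannot be separated by a naive singular-perturbation split, and the governed flow is only a \emph{degenerate} reparametrization whenever $\Delta=0$. The delicate point is to exclude a spurious stall in which $v$ lingers near the boundary $\theta(v)=\delta$ with $\Delta$ repeatedly dropping to zero; here I would lean simultaneously on the repulsion term \eqref{eq: Repulsion Field, Exclusion Cone}, which pushes $v$ strictly inward as $\theta(v)\to\delta$ so that $v=r$ is the only field equilibrium compatible with forward invariance of $\mathcal{R}_\delta$, and on strong returnability, which forbids $\Delta$ from remaining pinned at zero on a frozen admissible reference. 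Formalizing this interconnection—most cleanly through the invariance principle applied to the full system as in \cite{ERG_CSM}—is where the real work lies, while the remaining pieces reduce to direct citations of Propositions \ref{prp:Control} and \ref{prp: Destabilization} and Corollary \ref{cor:NF}.
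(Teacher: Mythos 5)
Your proposal is correct and follows essentially the same route as the paper: verify that the prestabilized loop, the dynamic safety margin \eqref{eq:DSM}, and the navigation field \eqref{eq:AttractionField_Global} satisfy the hypotheses of the general ERG result (Definitions 1--2 and Theorem 1 of \cite{ERG_CSM}), then conclude. The only difference is that the paper stops at the citation of \cite[Theorem 1]{ERG_CSM} (after noting, as you do, that $\Delta\geq0$ also keeps $(\tilde q,\omega)\in\Omega$ and that $v\in\mathcal{R}_\delta$ yields $\Delta(v,0)>\epsilon$), whereas you additionally unpack the invariance and reparametrization arguments inside that cited theorem; that extra material is consistent with the framework but not required here.
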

\begin{proof}
Following from Proposition \ref{prp:Control}, the closed-loop system \eqref{eq:Satellite}, \eqref{eq: Control Law} is asymptotically stable for any $v$ and any state that satisfies the constraint $(\tilde{q},\omega)\in\Omega$ defined in \eqref{eq:NoUnwinding}. By construction, $\Delta(\tilde{q},\omega)\geq0$ is sufficient to ensure $(\tilde{q},\omega)\in\Omega$ as well as the satisfaction of constraints \eqref{eq:ActSat_Original}-\eqref{eq:ExclusionCone}. Following from \eqref{eq:DSM}, \eqref{eq:DSMexclusion}-\eqref{eq:SatGamma}, and \eqref{eq:ActDelta}-\eqref{eq:ActGamma}, the set $\{(\tilde{q},\omega)|\Delta(\tilde{q},\omega)\geq0\}$ is forward-invariant for any constant $v$ and there exists $\epsilon>0$ such that $v\in\mathcal{R}_\delta$ implies $\Delta(v,0)>\epsilon$. As a result, \eqref{eq:DSM} satisfies the requirements of \cite[Definition 1]{ERG_CSM}. Moreover, it follows from Corollary \ref{cor:NF} that \eqref{eq:AttractionField_Global}, with \eqref{eq: Attraction Field}, \eqref{eq: Repulsion Field, Exclusion Cone}, and \eqref{eq: Destabilization Field}, satisfies the requirements of \cite[Definition 2]{ERG_CSM}. The remainder of the proof is therefore a direct result of \cite[Theorem 1]{ERG_CSM}.
\end{proof}

\section{Numerical Simulations}
The proposed control strategy is applied to a rigid spacecraft with the inertia matrix $J=\textrm{diag}([918~920~1365])~kgm^2$. The control layer is implemented with $k_P=918~Nm/rad$ and $k_D=3672~Nms/rad$, whereas the navigation layer is implemented with $\eta=5^o$, $\delta=5^o$, $\zeta=10^o$, and $\kappa=10$. The following subsections will address different scenarios that focus on different aspects of the constrained control problem.
\begin{figure}
\centering
\includegraphics[trim=1.5cm 3cm 2.5cm 2cm, clip=false, width=0.95\columnwidth]{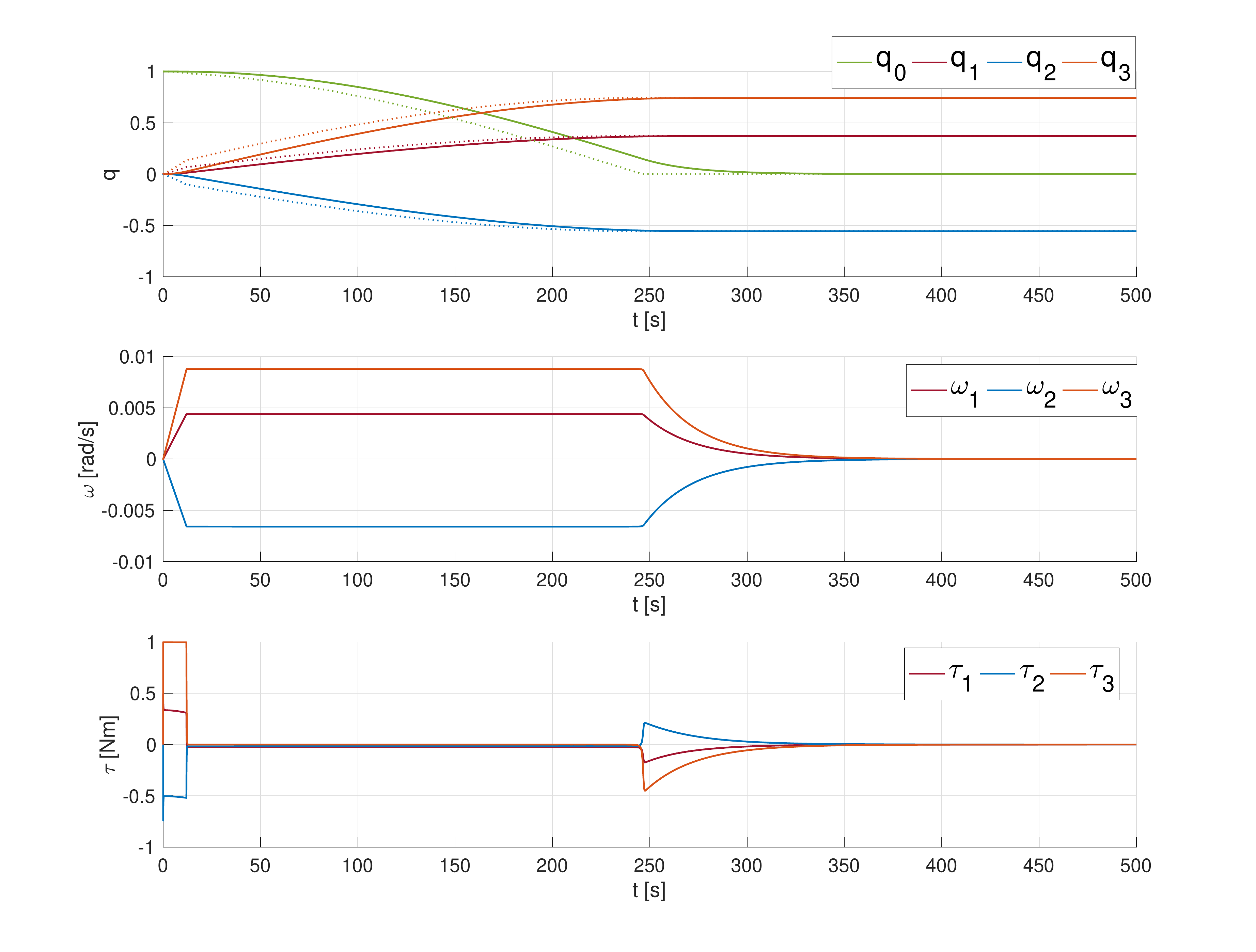}
\caption{Closed-loop response in the presence of only actuator saturation constraints. \textbf{Top:} The satellite orientation (solid) and auxiliary reference (dotted) converge to the desired steady-state. \textbf{Middle:} The angular velocities closely resemble the behavior of a trapezoidal trajectory planner. \textbf{Bottom:} The control inputs satisfy the constraint $|\tau|\leq1\,Nm$.}\label{fig:ex1_qwu}
\end{figure}

\subsection{Actuator Saturation Constraints}\label{ssec:ActSat}
In this subsection we consider a spacecraft subject to the maximum torque $\tau_{\max}=[1~1~1]^TNm$ and no exclusion cone constraints. Figure \ref{fig:ex1_qwu} illustrates the behavior obtained using the constrained control framework introduced in this paper. The method successfully achieves constraint satisfaction by imposing the maximum admissible torque at the very beginning and then moving at a constant angular rate before decelerating when it reaches close proximity of the desired setpoint $r=[0~0.74~0.37~-0.56]$. For comparison, it is worth noting that the response time obtained with the classic Lyapunov-based ERG that relies on the solution of \eqref{eq:ActGamma_No} is two orders of magnitude slower at reaching the same setpoint.
As discussed in Subsection \ref{ssec:DSM}, this is due to the fact that \eqref{eq:ActDelta} relies on a much larger invariant set.

\subsection{Single Exclusion Cone Constraint}\label{ssec:Single}
\begin{figure}
\centering
\subfloat{\includegraphics[width=0.5\columnwidth]{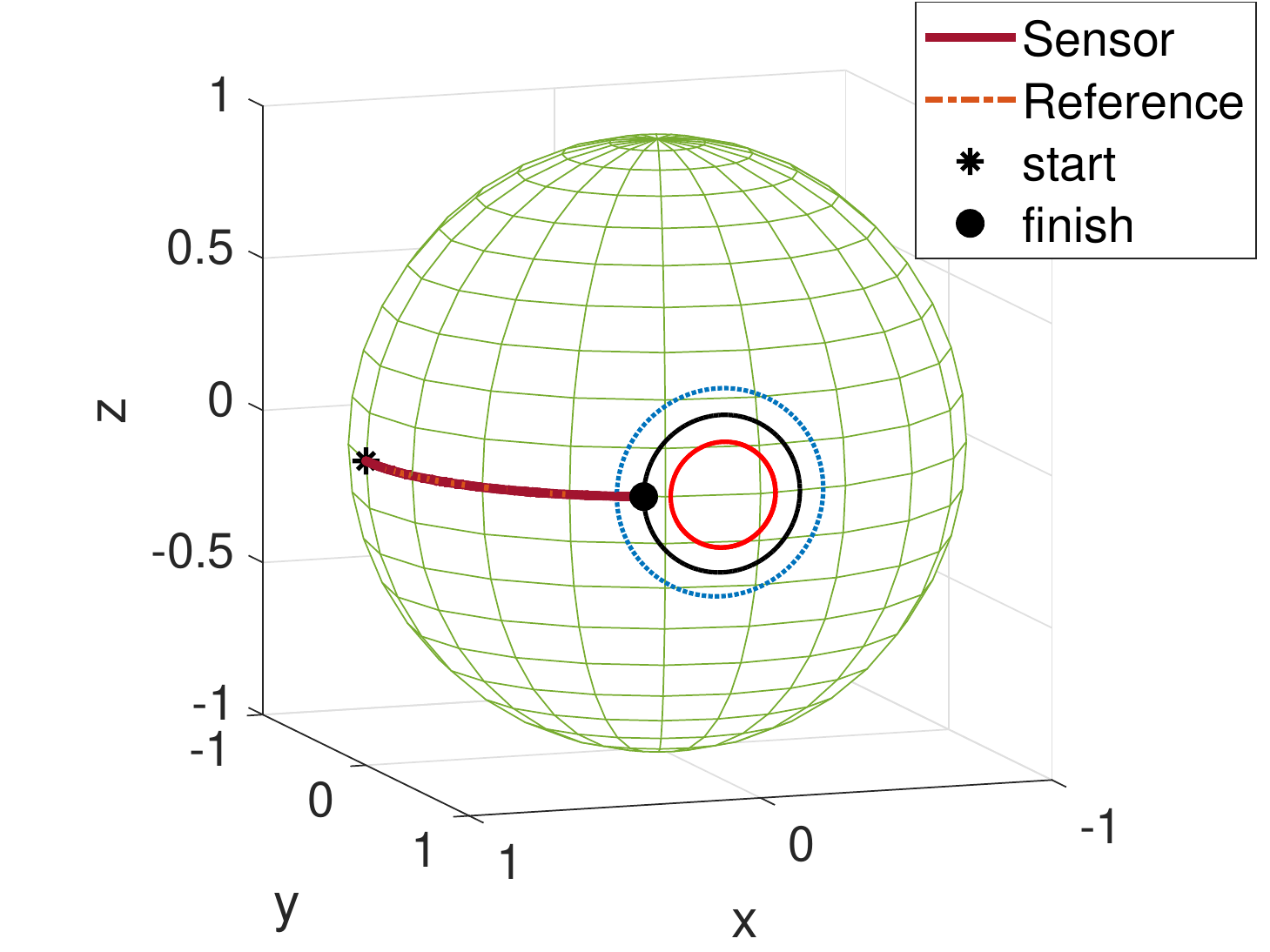}}
\subfloat{\includegraphics[width=0.5\columnwidth]{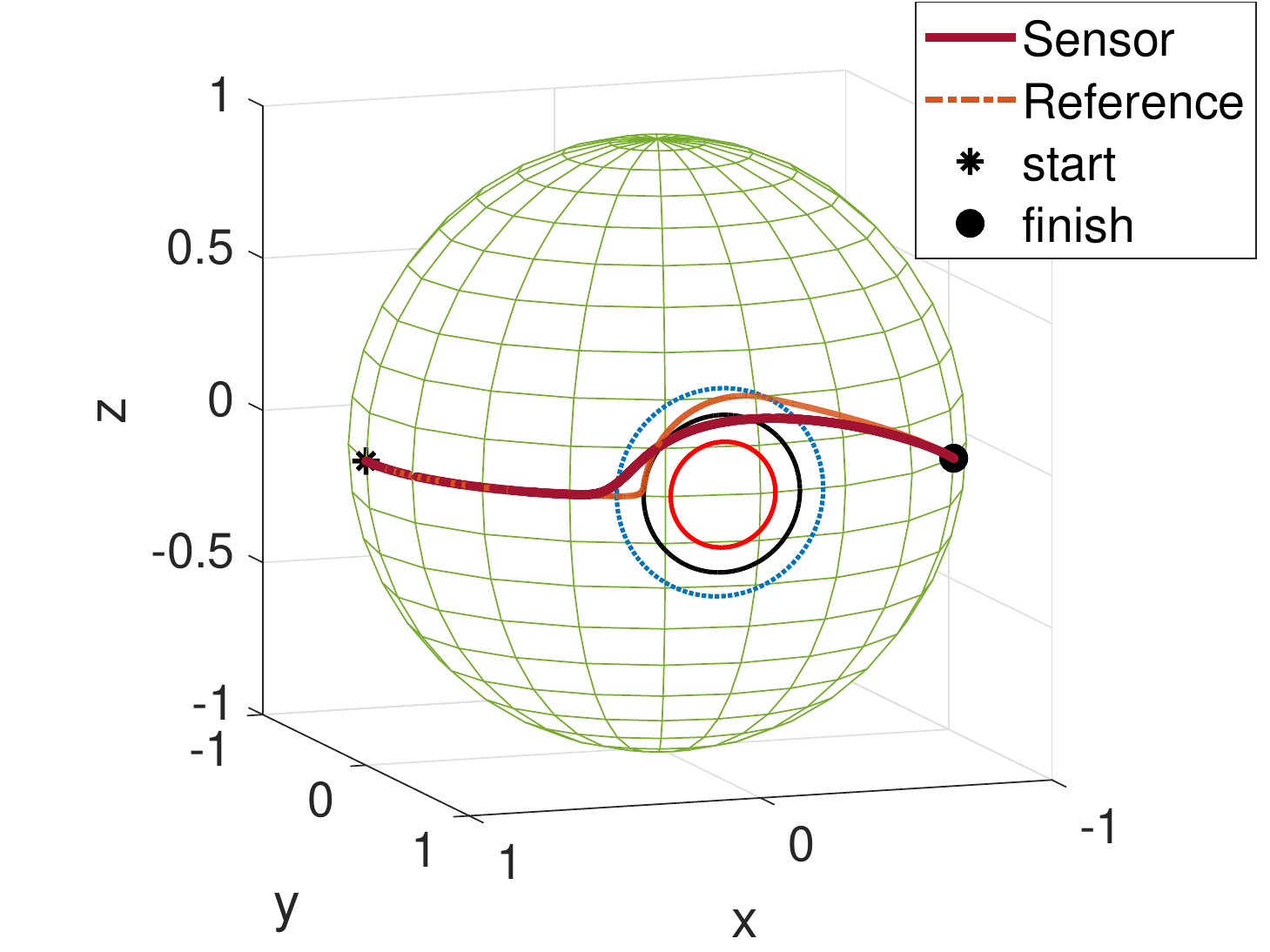}}
\caption{Closed-loop response of the satellite subject to an exclusion cone constraint. In the absence of the destabilization term (left), the system settles in an undesired equilibrium point. In the presence of the destabilization term (right) the obstacle is successfully overcome.\label{fig:ex2_sphr}}
\end{figure}
\begin{figure}
\centering
\includegraphics[trim=2cm 3cm 2.5cm 2cm, clip=true, width=0.95\columnwidth]{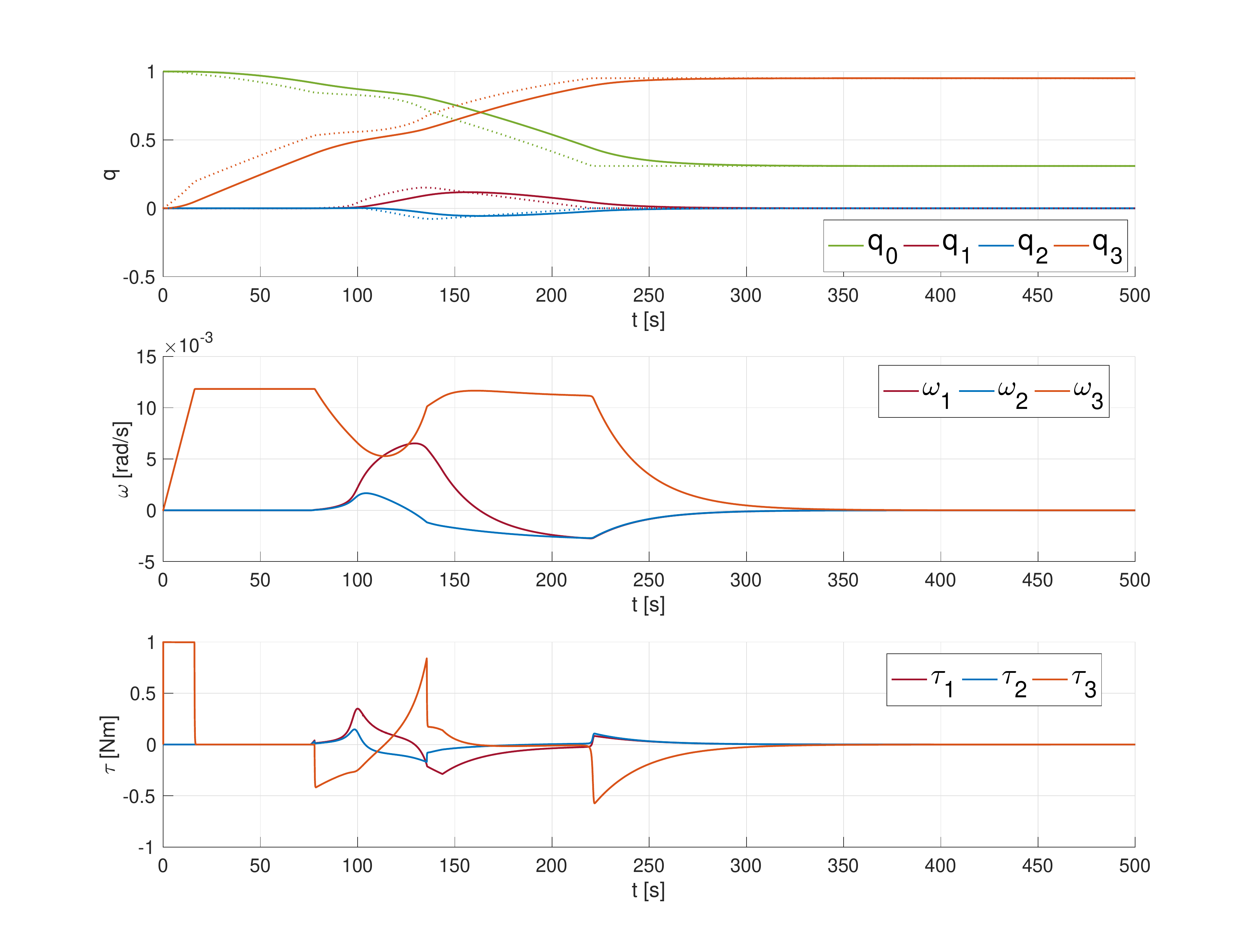}
\caption{Closed-loop response in the presence of actuator saturation constraints and a single exclusion cone constraint. The effect of the exclusion cone can be seen starting from time $t=78\,s$, which is when the auxiliary reference enters the influence region. \textbf{Top:} The satellite orientation (solid) and auxiliary reference (dotted) circumnavigate the exclusion cone and converge to the desired steady-state. \textbf{Middle:} Angular velocities. \textbf{Bottom:} The control inputs satisfy the constraint $|\tau|\leq1\,Nm$.}\label{fig:ex2_qwu}
\end{figure}
In this subsection we consider a satellite subject to the maximum torque $\tau_{\max}=[1~1~1]^TNm$ as well as an exclusion cone constraint. For the sake of constructing an example with known properties, the direction of the sun in the inertial reference frame is $h=[0~1~0]^T$ and the sensor orientation in the body reference frame is $e_1=[1~0~0]^T$. The half conic aperture of the sensor is $\psi_1=10^o$. Given $r=[0~0~0~1]^T$, it can be verified that $v(0)=[1~0~0~0]^T$ satisfies $v(0)\in\mathcal{S}_1$. Figure \ref{fig:ex2_sphr} illustrates the behavior obtained with and without the destabilization term \eqref{eq: Destabilization Field}. The satisfaction of the exclusion cone constraint can be seen by the fact that, in both cases, the sensor trajectories lie outside the red circle. The black circle represents the boundary of the set $\mathcal{R}_\delta$, which is shown to always contain the reference trajectories. Finally, the blue circle represents the border of the influence region. As expected from Proposition \ref{prp: Destabilization}, the destabilization term successfully prevents the ERG from converging to the undesired saddle point. Figure \ref{fig:ex2_qwu} illustrates the temporal behavior of the system state and inputs, which satisfy the imposed constraints.

\subsection{Multiple Exclusion Cones}\label{ssec:Multiple}
As a final simulation, we consider a spacecraft subject to the maximum torque $\tau_{\max}=[1~1~1]^TNm$ and two sensors, with $e_1=[0~0.9877~0.1564]^T$ and $e_2=[-0.04755~0.6545~0.5878]^T$, which must not be pointed in the direction $h=[0.7208~0.5237~0.4540]^T$. For ease of representation, the sensors have the same half conic aperture $\psi_1=\psi_2=10^o$. As illustrated in Figure \ref{fig:ex3_sphr}, the proposed method successfully generates a trajectory that orients the satellite to its desired reference while simultaneously ensuring constraint satisfaction.

\begin{figure}
\centering
\includegraphics[width=0.65\columnwidth]{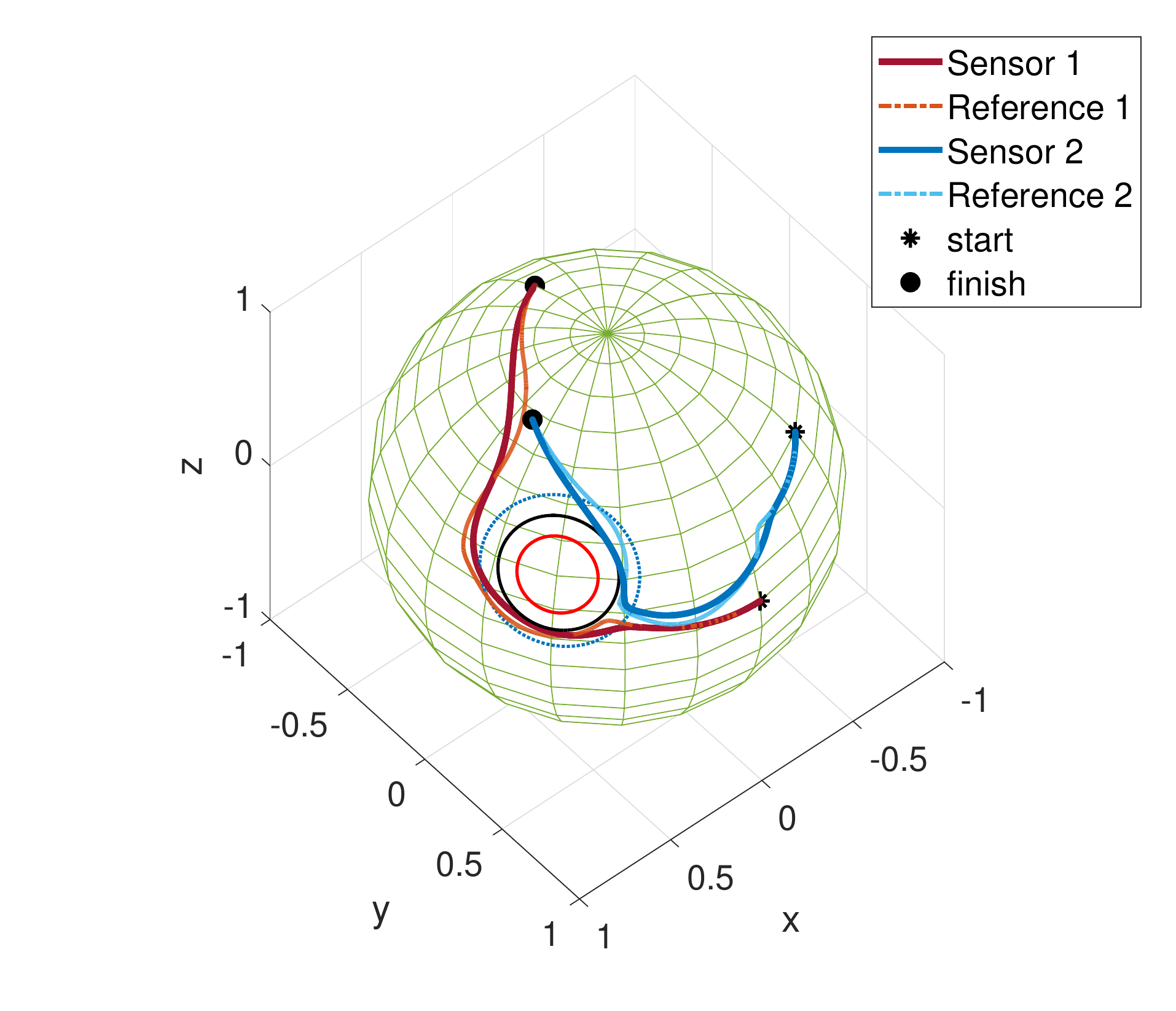}
\caption{Sensor trajectories in spherical and projected coordinates for the example addressed in Subsection \ref{ssec:Multiple}.}\label{fig:ex3_sphr}
\end{figure}

\begin{figure}
\centering
\includegraphics[trim=2cm 3cm 2.5cm 2cm, clip=true, width=0.95\columnwidth]{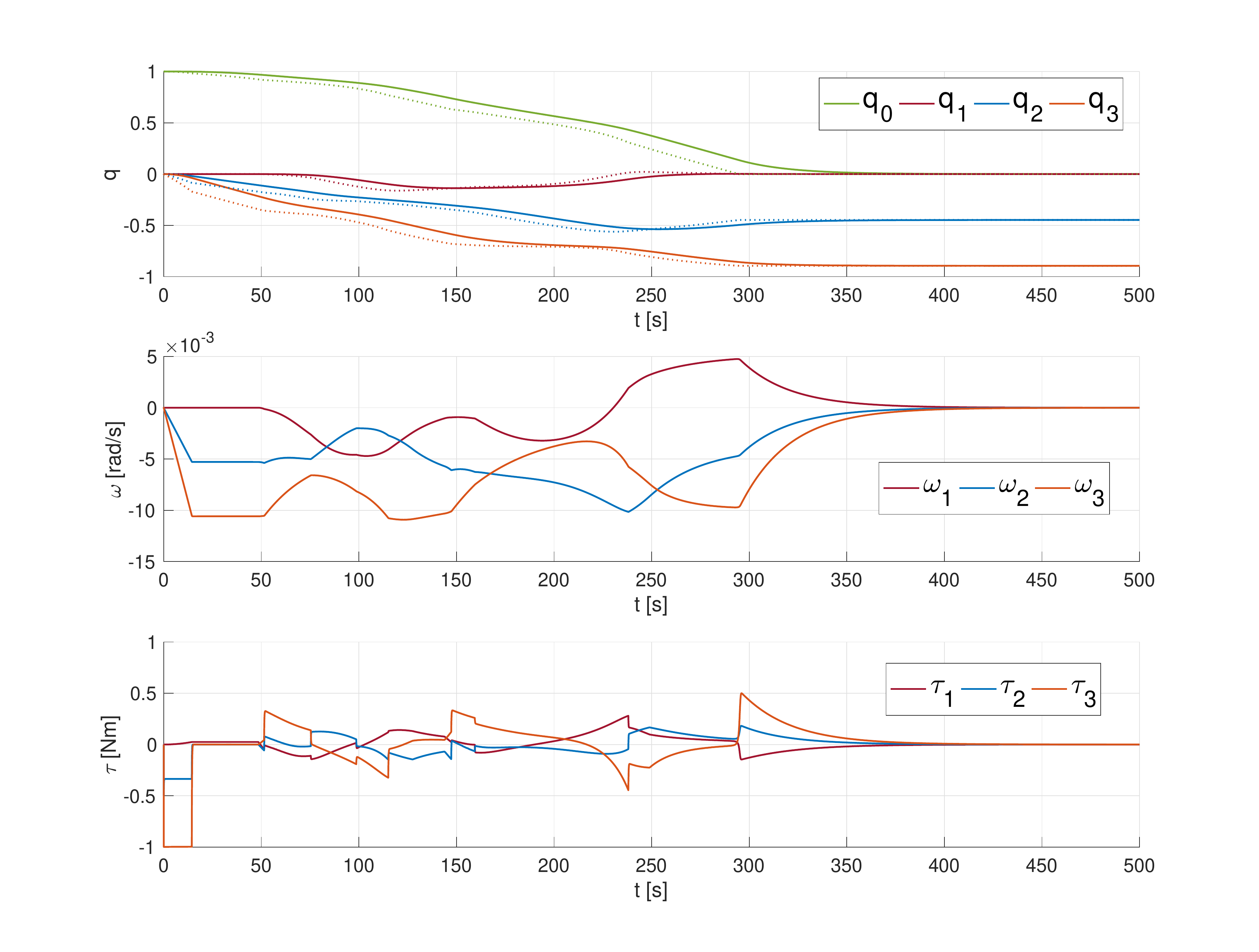}
\caption{Closed-loop response in the presence of actuator saturation and two exclusion cone constraints. \textbf{Top:} The satellite orientation (solid) and auxiliary reference (dotted) converge to the desired steady-state. \textbf{Middle:} Angular velocities. \textbf{Bottom:} The control inputs satisfy the constraint $|\tau|\leq1\,Nm$.}\label{fig:ex3_qwu}
\end{figure}

\section{Conclusions}
This paper has presented a novel approach to constrained spacecraft attitude control. The proposed method relies on the use of an explicit reference governor to manipulate the reference of a prestabilized system so that constraint satisfaction is guaranteed at all times. The proposed method successfully addresses actuator saturation and exclusion cone constraints in a simple and systematic manner that does not require online optimization. Numerical examples have been reported which illustrate the validity of the scheme. While guided to an extent by previous ERG theory introduced in Cartesian spaces, we note that both the construction of ERG for the quaternion-based attitude parameterization and the proofs of convergence properties are unique contributions of this paper. We have also introduced a novel procedure based on Nagumo's theorem to construct a less conservative dynamic safety margin. 

\bibliography{SO3}

% \appendix
% \subsection{Proof of Proposition 1}\label{app:Control}
% Given a constant reference $v\in\quat$, the dynamics of $\tilde{q}$ are \begin{subequations}\label{eq:SatelliteErr}
%   \begin{align}[left ={\empheqlbrace}~]
%     & 2\dot{\tilde{q}}=E(\tilde{q})\omega\\
%     & J\dot\omega=-\hat\omega J\omega+\tau.
%    \end{align}
% \end{subequations}
% For what concerns the proposed Lyapunov candidate function, it is worth noting that \eqref{eq:Lyapunov} is equal to zero for $\tilde{q}_R=1$, $\omega=0$ and is strictly positive for $\tilde{q}_R\in(-1,1)$. By taking the time derivative of \eqref{eq:Lyapunov}, substituting equations \eqref{eq:SatelliteErr}, \eqref{eq: Control Law} and using the property $a^T\!\hat{a}b=0$, $\forall a,b\in\real$, one obtains
% \begin{equation}
% \dot{V}(\tilde{q},\omega)=-\omega^Tk_D\omega,
% \end{equation}
% which is negative for any $\omega\neq0$. Given $\omega=0$, it follows from \eqref{eq:SatelliteErr}, \eqref{eq: Control Law} and $\tilde{q}\in\quat$ that $\dot\omega=0$ if and only if $|\tilde{q}_R|=1$. The asymptotic stability of the equilibrium point $\tilde{q}_R=1$ can therefore be proven using the Krasovskii-LaSalle principle. The inner approximation of the attraction basin is then obtained by using the stagnation point $\tilde{q}_R=-1$ and $\omega=0$ to compute an upper bound for the Lyapunov function \eqref{eq:Lyapunov}.

\subsection{Proof of Proposition 2}\label{app:Conservative}
The attraction term \eqref{eq: Attraction Field} is a conservative vector field obtained by computing the gradient of the potential function
\begin{equation}\label{eq: Attraction Potential}
P_r(v,r) = \begin{cases}
\frac2{\sin\frac\eta2}(1\!-\!\cos\frac{\tilde\alpha}2), & \textrm{if}~\tilde{\alpha}\!\in\![0,\eta]; \\
\frac2{\sin\frac\eta2}(1\!-\!\cos\frac\eta2)+\tilde\alpha-\eta, & \textrm{if}~\tilde{\alpha}\!\in\!(\eta,\pi].
\end{cases}
\end{equation}
Likewise, the repulsion term \eqref{eq: Repulsion Field, Exclusion Cone} is a conservative vector field obtained by computing the gradient of the potential function
\begin{equation}\label{eq: Repulsion Potential, Exclusion Cone}
P_e(v) = \begin{cases}
\frac12\frac{(\zeta-\theta(v))^2}{\zeta-\delta}, & \textrm{if}~\theta(v)\!\in\![0,\zeta); \\
0, & \textrm{if}~\theta(v)\!\in\![\zeta,\pi].
\end{cases}
\end{equation}
Given the combined potential function $P(v,r)=P_r(v,r)+P_e(v)$, it follows by construction that the dynamic system \eqref{eq:AttractionField_Conservative} is such that
\[
\dot{P}(v(t),r)=-\|\nabla P(v(t),r)\|^2,
\]
with
\begin{equation}\label{eq: Exclusion Antigradient}
\nabla P(v,r)=-(\rho_r(v,r)+\rho_e(v)).
\end{equation}
By taking advantage of the monotone time-decreasing nature of $P(v(t),r)$, each item in the statement of Proposition \ref{prp: Exclusion} can be proven separately. 

\emph{Point 1.} The invariance of $\mathcal{R}_\delta$ is proven by showing that, whenever $h^T\!R(v)e_1=\cos(\psi_1\!+\!\delta)$, the value of $h^T\!R(v)e_1$ cannot increase further. Following from equation \eqref{eq: Repulsion Field, Exclusion Cone}, the condition $h^T\!R(v)e_1=\cos(\psi_1\!+\!\delta)$ entails $\|\rho_e(v)\|=1$. Since equation \eqref{eq: Attraction Field} ensures $\|\rho_r(v)\|\leq1,~\forall v\in\quat$, it follows from equations \eqref{eq: Repulsion Field, Exclusion Cone} and \eqref{eq: Exclusion Antigradient} that
\begin{equation}
-\nabla P^T\!(v,r)\,\hat{h}R(v)e_1\geq0.
\end{equation}
As a result, the monotone time-decreasing property of $P(v(t),r)$ is sufficient to guarantee $h^T\!R(v(t))e_1\leq\cos(\psi_1\!+\!\delta),~\forall t\geq0$.  

\emph{Point 2.} Due to the requirement $\eta<\delta$, it follows from equation \eqref{eq: Attraction Field} that $\|\rho(s_1,r)\|=1,~\forall r\in\mathcal{R}_\delta$. Likewise, it follows from equation \eqref{eq:SaddleDistance} that $\|\rho_e(s_1)\|=1$. As a result, equation \eqref{eq:SaddleDirection} implies $\rho_r(s_1,r)+\rho_e(s_1)=0,~\forall r\in\mathcal{R}_\delta$. This is sufficient to show that $s_1\in\quat$ satisfying \eqref{eq:SaddleConditions} is an equilibrium point.

To study the stability properties of $s_1$, consider the case $v\in\mathcal{S}_1$. Following from \eqref{eq:SaddleManifold}, any point belonging to the manifold $\mathcal{S}_1$ is such that $\rho_r(v,r)+\rho_e(v)$ is parallel to $\rho_r(v,r)$. Therefore, system \eqref{eq:AttractionField_Conservative} will remain in the manifold and will necessarily converge to $s_1\in\mathcal{S}_1$ due to the time-decreasing properties of $P(v(t))$ and the fact that $r\notin\mathcal{S}_1$.

Consider now the case $v=ws_1$, where the quaternion $w\in\quat$ satisfying $w_R=\cos\frac\varepsilon2$, $w_I\perp\hat{h}R(s_1)e_1$ represents an infinitesimal rotation of $\varepsilon>0$ away from the manifold $\mathcal{S}_1$. Following from \eqref{eq: Attraction Potential}, \eqref{eq:SaddleConditions}, and the condition $\eta<\delta$,
\begin{equation}\label{eq:Variation_Pr}
P_r(ws_1,r)-P_r(s_1,r)=\Phi(\arg(s_1r^*)),
\end{equation}
where $\arg(q):=2\arccos(q_R)$ and 
\begin{equation}\label{eq:Phi}
\Phi(x):=2\arccos\left(\cos\left(\frac{\varepsilon}2\right)\cos\left(\frac x2\right)\right)-x
\end{equation}
is a strictly monotone decreasing function which is equivalent to $\Phi(\arg(s_1r^*))=\arg(ws_1r^*)-\arg(s_1r^*)$. Likewise, it follows from \eqref{eq: Repulsion Potential, Exclusion Cone} that
\begin{equation}\label{eq:Variation_Pe}
P_e(ws_1)-P_e(s_1)=-\Phi(\vartheta(s_1))+\frac{\Phi(\vartheta(s_1)))^2}{2(\zeta-\delta)},
\end{equation}
where $\vartheta(s_1):=\arccos(h^T\!R(s_1)e_1)-\psi_1$. By combining equations \eqref{eq:Variation_Pr}, \eqref{eq:Variation_Pe} it follows that
\begin{equation}
P(ws_1,r)-P(s_1,r)\!=\!\Phi(\arg(s_1r^*))\!-\!\Phi(\vartheta(s_1))\!+\!O(\varepsilon^2).
\end{equation}
Due to equation \eqref{eq:SaddleConditions}, $r\in\mathcal{R}_\zeta$ implies $\arg(s_1r^*)>\vartheta(s_1)$, thus leading to $P(ws_1,r)<P(s_1,r)$ for an arbitrarily small perturbation $\varepsilon>0$. This is sufficient to show that the equilibrium point $s_1$ is unstable in every direction perpendicular to $\hat{h}R(s_1)e_1$. As a result, for all $v\not\in\mathcal{S}_1$, system \eqref{eq:AttractionField_Conservative} cannot converge to $s_1$.

\emph{Point 3.} The asymptotic stability of the equilibrium point $v=r$ follows directly from the fact that $P(r,r)=0$, $\forall r\in\mathcal{R}_\zeta$, the function $P(v,r)$ is positive definite, and the trajectory of \eqref{eq:AttractionField_Conservative} is such that $P(v(t),r)$ is monotonically time-decreasing. The attraction basin can thus be estimated by taking the entirety of $\mathcal{R}_\delta$ and subtracting the attraction basin of the saddle point $s_1$.

\subsection{Proof of Proposition 3}\label{app:Global}
Given the potential function $P(v,r)=P_r(v,r)+P_e(v)$ obtained using \eqref{eq: Attraction Potential}, \eqref{eq: Repulsion Potential, Exclusion Cone}, it follows by construction that the dynamic system \eqref{eq:AttractionField_Global} is such that
\[
\dot{P}(v(t),r)=-\nabla P(v(t),r)^T\,\bigl(\nabla P(v(t),r)\!+\!\rho_d(v(t),r)\bigr),
\]
with $\nabla P(v,r)$ given in \eqref{eq: Exclusion Antigradient}. Due to equations \eqref{eq: Destabilization Field}-\eqref{eq:DestabilizationDirection}, it follows that $\nabla P(v(t),r)^T\rho_d(v(t),r)=0$, thus implying that the introduction of the saddle destabilization term does not compromise the monotone time-decreasing properties of $P(v(t),r)$ detailed in the proof of Proposition \ref{prp: Exclusion}. Each item in the statement will thus be proven separately. 

\emph{Point 1.} The invariance of $\mathcal{R}_\delta$ is proven using the same arguments as Proposition \ref{prp: Exclusion}. In particular, it is sufficient to note that $\rho_d(v,r)$ cannot cause a violation of constraints due to the fact that $\rho_d(v,r)\hat{h}R(v)e_1=0$ whenever $h^T\!R(v)e_1=\cos(\psi_1\!+\!\delta)$.  

\emph{Point 2.} Since $\rho_d(v,r)$ is perpendicular to $\rho_r(v,r)+\rho_e(v)$, the destabilization term is unable to generate additional equilibria with respect to the ones identified in the proof of Proposition \ref{prp: Exclusion}. Moreover, it follows from \eqref{eq: Destabilization Field} that $\rho_d(s_1,r)\neq0$. This implies that $s_1$, i.e. the saddle point of the potential function $P(v)$, is no longer equilibrium points for the system \eqref{eq:AttractionField_Global}, which now admits $r$ as the only equilibrium point.

Since the vector field \eqref{eq:AttractionField_Global} is not Lipschitz, however, it is also necessary to show that the discontinuity found in \eqref{eq: Destabilization Field}-\eqref{eq:DestabilizationDirection} cannot cause limit cycles. To do so, let $d$ be a generic point in the influence region
\begin{equation}\label{eq:InfluenceRegion}
\mathcal{D}_1:=\left\{v\!\in\!\mathcal{S}_1~\left|~
h^T\!R(v)e_1\geq\cos(\psi_1+\zeta)\right.\right\},
\end{equation}
and let $w\in\quat$ satisfying $w_R=\cos\frac\varepsilon2$, $w_I\perp\hat{h}R(s_1)e_1$ be an infinitesimal rotation of $\varepsilon>0$ away from the domain $\mathcal{D}_1$. Then, the value of the vector field in $wd$ is $\rho_r(wd,r)+\rho_e(wd)+\rho_d(wd,r)$. As already proven in Point 2 of Proposition \ref{prp: Exclusion}, the vector field $\rho_r(wd,r)+\rho_e(wd)$ points away from the set $\mathcal{D}_1$. As for $\rho_d(wd,r)$, it follows from \eqref{eq:DestabilizationDirection} that $\rho_d(wd,r)^Tw=0$, meaning that the destabilization terms does not point towards the set $\mathcal{D}_1$. Since the set $\mathcal{D}_1$ is repulsive, it follows from \cite{Discontinuous} that there are no limit cycles associated to this discontinuity, therefore the point $v=r$ is the only asymptotic solution of the proposed flow field.

\subsection{Proof of Corollary 1}\label{app:Corollary}
The proof is analogous to the one provided for Proposition \ref{prp: Destabilization}. Indeed, the potential function $P(v,r)=P_r(v,r)+P_e(v)$ is monotone time decreasing and is characterized by $l$ saddle points, each of which satisfy equations \eqref{eq:SaddleConditions} for a different exclusion cone constraint. Given the influence regions \eqref{eq:InfluenceRegion} associated to each exclusion cone constraint, it follows from Assumption \ref{ass:DistanceMargin} that $\mathcal{D}_i\cap\mathcal{D}_j=\emptyset$, $\forall i\neq j$. As a result, the local properties of each saddle point $s_i$ are not influenced by the presence of the other exclusion cone constraints. Global Asymptotic Stability therefore follows from the fact that $r$ is the only equilibrium point in the domain $\mathcal{R}_\delta$, there are no limit cycles due to the discontinuities, and $P(v(t),r)$ is a time-decreasing function.

\end{document}